\newif\ifarxiv
\newcommand{\widesim}[2][1.5]{
  \mathrel{\overset{#2}{\scalebox{#1}[1]{$\sim$}}}
}
\tikzset{
    linpt/.style={fill,rectangle,yscale=0.085cm,xscale=0.012cm},
    >=stealth',
    pil/.style={
           ->,
           thick,
           shorten >=4pt
           }
}
\newtheorem{theorem}{Theorem}
\newtheorem{lemma}[theorem]{Lemma}
\newtheorem{observation}[theorem]{Observation}
  \providecommand\BibTeX{{%
    \normalfont B\kern-0.5em{\scshape i\kern-0.25em b}\kern-0.8em\TeX}}}
\begin{document}

\title{Determining Recoverable Consensus Numbers}

\ifarxiv
\author{Sean Ovens \\
	University of Waterloo\\
	\texttt{s2ovens@uwaterloo.ca} \\
}
\else
\author{Sean Ovens}
\email{s2ovens@uwaterloo.ca}
\orcid{0000-0003-0785-2014}
\affiliation{%
  \institution{University of Waterloo}
  \country{Canada}
}

\begin{CCSXML}
<ccs2012>
   <concept>
       <concept_id>10003752.10003809.10010170.10010171</concept_id>
       <concept_desc>Theory of computation~Shared memory algorithms</concept_desc>
       <concept_significance>500</concept_significance>
       </concept>
   <concept>
       <concept_id>10003752.10003809.10011778</concept_id>
       <concept_desc>Theory of computation~Concurrent algorithms</concept_desc>
       <concept_significance>500</concept_significance>
       </concept>
   <concept>
       <concept_id>10003752.10003753.10003761.10003763</concept_id>
       <concept_desc>Theory of computation~Distributed computing models</concept_desc>
       <concept_significance>300</concept_significance>
       </concept>
 </ccs2012>
\end{CCSXML}

\ccsdesc[500]{Theory of computation~Shared memory algorithms}
\ccsdesc[500]{Theory of computation~Concurrent algorithms}
\ccsdesc[300]{Theory of computation~Distributed computing models}

\keywords{recoverable consensus; shared memory; non-volatile memory; wait-free hierarchy}
\fi


\ifarxiv
\maketitle
\else
\fi

\begin{abstract}
Herlihy's wait-free consensus hierarchy classifies the power of object types in asynchronous shared memory systems where processes can permanently crash (i.e. stop taking steps).
In this hierarchy, a type has consensus number $n$ if objects of that type can be used along with (read/write) registers to solve consensus among $n$ processes that can permanently crash, but not among $n+1$ or more processes.
In systems where processes can recover after crashing, the power of an object type to solve consensus may be different.
Golab's recoverable consensus hierarchy classifies the power of object types in such a system.
In the recoverable consensus hierarchy, a type has recoverable consensus number $n$ if objects of that type can be used along with registers to solve consensus among $n$ processes that can recover after crashing, but not among $n+1$ or more processes.
In this paper, we prove that the recoverable consensus hierarchy of deterministic, readable types is robust, i.e., if consensus can be solved among $n$ processes that can recover after crashing using a collection of objects of deterministic, readable types, then one of these types has recoverable consensus number at least $n$.
This is important for comparing the relative computational power of different deterministic, readable types, because it implies that one cannot combine various objects to obtain an algorithm that is better at solving recoverable consensus than any of the individual object types.
Our result can be used to show that, for all $n \geq 4$, there exists a readable type with consensus number $n$ and recoverable consensus number $n-2$.
We also show that, for all $n > n' \geq 1$, there exists a non-readable type that has consensus number $n$ and recoverable consensus number $n'$.
\end{abstract}

\ifarxiv
\else
\maketitle
\fi

\section{Introduction}\label{sec:intro}

Consensus has been crucial in developing our understanding of what can be implemented using various combinations of shared object types.
In the consensus problem, processes begin with private inputs and attempt to collectively agree on a single output, which must be equal to one of the inputs.
A solution to consensus is wait-free if every process terminates within a finite number of its own steps.
The wait-free consensus hierarchy \cite{h-91} can be used to compare the relative computational power of various types in asynchronous shared memory models.
It also allows us to reason about what can be implemented in a wait-free manner using certain types; if $\mathcal{T}_1$ has consensus number $n$ and $\mathcal{T}_2$ has consensus number $m < n$, then it is impossible to use objects of $\mathcal{T}_2$ along with registers to implement an object of $\mathcal{T}_1$ in a wait-free manner for more than $m$ processes.
Furthermore, Herlihy \cite{h-91} showed that $n$-process consensus is universal for systems with $n$ processes in the following sense: every object has an $n$-process wait-free implementation using objects with consensus number at least $n$ along with registers.

In this paper, we study Golab's~\cite{g-20} recoverable consensus hierarchy, which classifies the power of objects to solve consensus among processes that can recover after crashing.
When a process crashes, its local variables (including its program counter) are all reset to their initial values.
However, all shared objects retain their values in the wake of a crash.
This is a standard asynchronous shared memory model for systems containing non-volatile main memory \cite{g-20,abh-18,gr-16,dffr-22}, which retains its state when it loses power.
A recoverable consensus algorithm satisfies recoverable wait-freedom if a process that executes its algorithm starting from its initial state either crashes or outputs a value after a finite number of its own steps.
Therefore, unlike a wait-free consensus algorithm, infinitely long executions are admissible as long as each process that takes infinitely many steps also crashes infinitely often.
Notice that a recoverable consensus algorithm solves wait-free consensus if process crashes are permanent.
Therefore, the consensus number of any type is greater than or equal to its recoverable consensus number.

There are two main kinds of crashes that have been considered.
With simultaneous crashes, all of the processes always crash at the same time.
Simultaneous crashes could be used to model power failures in multiprocessing systems, for example.
It is known that the consensus number of an object is the same as its recoverable consensus number if all crashes are simultaneous \cite{g-20,dffr-22}.
Therefore, in this paper we focus on individual crashes, where any process can crash at any time.
This may be more realistic for modelling crashes in distributed systems where processes may reside on physically distinct machines.
Golab~\cite{g-20} showed that the recoverable wait-free consensus hierarchy with individual crashes is different from Herlihy's \cite{hs-99} wait-free consensus hierarchy by proving that test-and-set objects, which have consensus number $2$, cannot be used along with registers to solve recoverable wait-free consensus between $2$ processes if individual crashes are allowed.

Like wait-free consensus, recoverable consensus is known to be universal; Berryhill, Golab, and Tripunitara~\cite{bgt-15} showed that Herlihy's \cite{h-91} universal construction also works in the recoverable setting with simultaneous crashes by placing part of the data structure in non-volatile memory.
Even when processes can crash individually, Delporte-Gallet, Fatourou, Fauconnier, and Ruppert \cite{dffr-22} showed that recoverable consensus is universal.
More specifically any object can be implemented in a recoverable wait-free manner using objects with recoverable consensus number at least $n$ along with registers.
Implementations obtained by this construction provide detectability \cite{fhmp-18}; roughly speaking, if an operation $op$ is interrupted by a crash, then the process $p_i$ that invoked $op$ can tell upon recovery whether $op$ linearized or not before the crash and, if it did, $p_i$ can determine its response.

\smallskip

Delporte-Gallet, Fatourou, Fauconnier, and Ruppert \cite{dffr-22} recently studied the relationship between the solvability of wait-free consensus and recoverable wait-free consensus.
In particular, they focused on object types that are \emph{deterministic} and \emph{readable}.
A type is \emph{deterministic} if, for any object and any operation of that type, when the operation is applied to the object, its response and the resulting value of the object is uniquely determined by the object's current value and the operation that was applied.
A type is \emph{readable} if it supports an operation that returns its current value and does not change its value.
The authors defined a characterization of deterministic, readable objects that can be used along with registers to solve recoverable wait-free consensus among $n$ processes:
Informally, a type is called $n$-recording if there exist a value $u$, $n$ operations $o_0, \ldots, o_{n-1}$ assigned to processes $p_0, \ldots, p_{n-1}$, respectively, and a partition of the processes into two teams such that, if some subset of the processes apply their operations to an object of that type with initial value $u$, then the resulting value of the object records the team of the first process to apply its operation.
They showed that objects of any $n$-recording, deterministic, readable type can be used, along with registers, to solve recoverable wait-free consensus among $n$ processes.
In other words, the recoverable consensus number of any $n$-recording, deterministic, readable type is at least $n$.
Furthermore, they showed that any deterministic, readable type $\mathcal{T}$ with consensus number $n \geq 4$ is $(n-2)$-recording.
Therefore, the recoverable consensus number of $\mathcal{T}$ is between $n-2$ and $n$.
Before our work, it was unknown whether there exists a type that has consensus number $n$ and recoverable consensus number $n-2$.

Delporte-Gallet, Fatourou, Fauconnier, and Ruppert \cite{dffr-22} also nearly proved that $n$-recording is necessary for solving recoverable wait-free consensus among $n$ processes.
More specifically, they showed that if recoverable wait-free consensus can be solved among $n$ processes using objects of types $\mathcal{T}_0, \mathcal{T}_1, \mathcal{T}_2, \ldots$, then at least one of these types $\mathcal{T}_i$ is $(n-1)$-recording.
Until our results, it remained open to determine whether at least one of the types $\mathcal{T}_i$ is $n$-recording.
For all $n \geq 4$, the authors also defined a deterministic, readable type $\mathcal{X}_n$ that has consensus number $n$, is $(n-2)$-recording, and is not $(n-1)$-recording.
Hence, the recoverable consensus number of $\mathcal{X}_n$ is at most $n$ and at least $n-2$.

Unlike recoverable wait-free consensus, there is a known condition that characterizes deterministic, readable types that can solve wait-free consensus among $n$ processes:
Ruppert \cite{r-00} showed that a deterministic, readable type has consensus number at least $n$ if and only if it is $n$-discerning.
It is possible to decide whether a type is $n$-discerning in a finite amount of time, as long as the type has finite numbers of values and operations.
Furthermore, Ruppert showed that if wait-free consensus can be solved using objects of types $\mathcal{T}_0, \mathcal{T}_1, \mathcal{T}_2, \ldots$, then at least one of these types $\mathcal{T}_i$ is $n$-discerning.
In other words, it is impossible to have a set of deterministic, readable types whose power to solve wait-free consensus exceeds the power of any individual type in the set.
This property of the wait-free consensus hierarchy of readable types is called \emph{robustness}.

Robustness is a desirable property because it allows us to determine the power of a system containing many different types by studying the power of the individual types.
Robust wait-free hierarchies were first studied by Jayanti \cite{j-97}, who showed that a restricted version of the wait-free consensus hierarchy is not robust; the restricted version of the hierarchy classifies types based on the number of processes among which a single object of that type (along with infinitely many registers) can solve wait-free consensus.
Schenk \cite{sthesis-96} and Lo and Hadzilacos \cite{lh-00} later showed that the unrestricted\todo{is this the right word?} wait-free consensus hierarchy is not robust if non-deterministic objects are allowed.
On the other hand, Herlihy and Ruppert \cite{hr-00} showed that the consensus hierarchy is robust for the set of all deterministic object types that can be accessed at most once by each process.


\smallskip

This paper presents the following contributions.
Our main result is that the recoverable consensus hierarchy is robust for the set of all deterministic, readable types.
To prove this, we show that if objects of deterministic types $\mathcal{T}_0, \mathcal{T}_1, \mathcal{T}_2, \ldots$ can be used along with registers to solve recoverable wait-free consensus among $n$ processes, then at least one of these types $\mathcal{T}_i$ is $n$-recording.
By the result of \cite{dffr-22} mentioned previously, if $\mathcal{T}_i$ is deterministic and readable, then objects of $\mathcal{T}_i$ along with registers can be used to solve recoverable wait-free consensus among $n$ processes.
Furthermore, our result implies that the type $\mathcal{X}_n$ has recoverable consensus number $n-2$.
In our proof, we adapt the idea of a valency argument to the recoverable setting.
Previous work on recoverable consensus has attempted to do the same by considering only executions in which a fixed set of processes do not crash, and then constructing such an execution in which at least one of these processes takes an infinite number of steps without outputting a value.
Our proof generalizes this approach by only considering executions in which the maximum number of times each process  $p_i$ can crash is proportional to the number of steps taken by processes $p_0, \ldots, p_{i-1}$ with lower identifiers.
More specifically, the number of times process $p_i$ crashes is $f(n)$ times the number of steps taken by $p_0, \ldots, p_{i-1}$, for some function $f$.
In particular, $p_0$ never crashes.
Intuitively, processes with smaller identifiers have higher priority in the sense that they are allowed to crash less often than processes with larger identifiers.
This way, we avoid fixing a set of processes that do not crash, but in any infinitely long execution, some process takes infinitely many steps without crashing.
We believe that this approach could be used to obtain other impossibility and complexity results for recoverable algorithms.

Next, we show that the power to solve consensus of objects that do not support a read operation can decrease arbitrarily when crash recovery is possible.
More specifically, we show that, for all $n > n' \geq 1$, there exists a deterministic type $\mathcal{T}_{n, n'}$ with consensus number $n$ and recoverable consensus number $n'$.
It remains open to determine whether the recoverable consensus hierarchy is robust for the set of all deterministic objects.

\smallskip

We discuss our model of computation in Section~\ref{sec:model}.
Section~\ref{sec:nec} contains our proof that the recoverable consensus hierarchy of deterministic, readable objects is robust.
In Section~\ref{sec:nonreadable}, we define the type $\mathcal{T}_{n, n'}$ and prove that it has consensus number $n$ and recoverable consensus number $n'$.
Finally, we conclude with a brief discussion of future directions in Section~\ref{sec:conclusion}.

\section{Model}\label{sec:model}

We use a standard asynchronous shared memory model in which $n$ processes $p_0, \ldots, p_{n-1}$ communicate using shared \emph{objects}.
The number $i$ is called process $p_i$'s \emph{identifier}.
Each object has a \emph{type}, which defines a set of \emph{values}, a set of \emph{operations} that can be applied to an object of the type, and a set of \emph{responses} that these operations can return.
Every type has a \emph{sequential specification} that defines, for each value $v$ and each operation $op$ of the type, the response to that operation and a resulting value when $op$ is applied to an object of the type with value $v$.
In this paper, we consider only \emph{deterministic} types where, for any value $v$ and any operation $op$ of the type, there is exactly one possible response to $op$ and exactly one possible resulting value when $op$ is applied to an object of the type with value $v$.
A type is \emph{readable} if it supports the \emph{Read} operation; when applied to an object, the \emph{Read} operation returns the current value of the object without changing its value.

In a \emph{task}, each process begins with an input value from some domain and the processes produce output values such that the set of all output values satisfies some specification.
In the \emph{consensus} task, processes begin with input values from $\{0, 1\}$ and the processes that produce output values must satisfy the following properties:

\begin{itemize}
\item \emph{Agreement}: no two processes output different values.
\item \emph{Validity}: every output is the input of some process.
\end{itemize}


A \emph{configuration} of a consensus algorithm consists of a state for each process and a value for each object.
An \emph{output state} of a process denotes the value that process outputs when it enters the state.
An \emph{algorithm} defines a set of objects, an initial value for each of these objects, and an initial state for each process.
Furthermore, for every state of every process, an algorithm defines the next step that process will apply.
A \emph{step} can be an operation applied to some object or a no op.
When a process takes a step that applies an operation to an object in a configuration, its operation returns a response which depends on the operation that was applied, the value of the object in the configuration, and the sequential specification of the object's type.
Immediately after applying the operation, the process performs some local computation and possibly updates its state, resulting in a new configuration.
If a process takes a step when it is in an output state, that step is always a no op, which leaves the configuration unchanged.
In an \emph{initial configuration}, every process is in its initial state and every object has its initial value.
We use $\textit{value}(O, C)$ to denote the value of the object $O$ in the configuration $C$.
If $v \in \{0, 1\}$, then we use $\bar{v}$ to denote $1 - v$.

An \emph{execution} consists of an alternating sequence of configurations and \emph{events}, each of which is either a step or a crash of some process.
When a process crashes, it is reset to its initial state.
Every execution begins with a configuration, and every finite execution ends with a configuration.
Each event in an execution is applied in the configuration that precedes it, which results in the configuration that follows the step.
If $C$ is a configuration and $\alpha$ is a finite execution starting from $C$, then we use $C\alpha$ to denote the configuration at the end of $\alpha$.
If $C' = C\alpha$, then we say $C'$ is \emph{reachable from $C$ via $\alpha$}.
If $\alpha$ is a finite execution starting from $C$ and $\beta$ is an execution starting from $C\alpha$, then we use $\alpha\beta$ to denote the execution obtained by removing the first configuration from $\beta$ and appending the resulting sequence to $\alpha$.
An \emph{empty execution} contains no steps.
When a process $p_i$ takes a step leading to a configuration $C'$ in which $p_i$ has an output state corresponding to the value $v$, we say that $p_i$ outputs $v$ in that step.
Furthermore, for every execution $\alpha'$ starting from $C'$ (including the empty execution from $C'$), we say that $p_i$ has output the value $v$ in $C'\alpha'$.
In the context of consensus, we say that $p_i$ has \emph{decided} the value $v$ in $C'\alpha'$.
A \emph{solo-terminating} execution of some process $p_i$ contains only steps by $p_i$ and ends with a configuration in which $p_i$ has output a value.
An execution is produced by an \emph{adversary}, who decides which process will take the next step in each configuration.
The adversary also decides if and when processes crash.

A \emph{schedule} is a sequence of processes and crashes.
We use $c_i$ to denote a crash by process $p_i$.
The \emph{schedule of an execution} is the sequence of processes that take steps and crashes that occur in that execution.
If $\sigma$ is a schedule and $C$ is a configuration, then we use $\textit{exec}(C, \sigma)$ to denote the execution constructed by performing the steps and crashes in $\sigma$, starting with the configuration $C$.
We use $C\sigma$ to denote the final configuration in $\textit{exec}(C, \sigma)$.
For example, $Cp_ip_jc_i$ is the configuration that results when $p_i$ takes its next step in $C$, then $p_j$ takes its next step, and then $p_i$ crashes.
If $\alpha$ is an execution starting from $C$ and $\sigma$ is a schedule, then we use $\alpha\sigma$ to denote the execution $\alpha\;\textit{exec}(C\alpha, \sigma)$.

The configurations $C$ and $C'$ are \emph{indistinguishable} to a set of processes $\mathcal{Q}$, denoted by $C \widesim{\mathcal{Q}} C'$, if every process in $\mathcal{Q}$ has the same state in $C$ and $C'$.
Consider two executions $\alpha$ and $\alpha'$ from $C$ and $C'$, respectively.
The executions $\alpha$ and $\alpha'$ are \emph{indistinguishable} to $\mathcal{Q}$ if $C \widesim{\mathcal{Q}} C'$, $\alpha$ and $\alpha'$ contain only events by $\mathcal{Q}$, the objects accessed by $\alpha$ and $\alpha'$ have the same values in $C$ and $C'$, and the schedules of $\alpha$ and $\alpha'$ are the same.
If $C \widesim{\mathcal{Q}} C'$ and $\alpha$ is an execution from $C$ containing only events by $\mathcal{Q}$, and all of the objects accessed by $\mathcal{Q}$ have the same values in $C$ and $C'$, then there exists an execution $\alpha'$ from $C'$ containing only events by $\mathcal{Q}$ such that $\alpha \widesim{\mathcal{Q}} \alpha'$ \cite{ae-14}.

In a \emph{wait-free} consensus algorithm, every process outputs a value after taking a finite number of steps.
A type has \emph{consensus number $n$} if there is a wait-free consensus algorithm for $n$ processes using objects of that type along with registers, but there is no such algorithm for $n+1$ or more processes.
\emph{Recoverable wait-freedom} is a progress condition requiring that every process either crashes or outputs a value after taking a finite number of steps from its initial state.
A consensus algorithm that satisfies recoverable wait-freedom is called a \emph{recoverable wait-free consensus algorithm}.
A type has \emph{recoverable consensus number $n$} if there is a recoverable wait-free consensus algorithm for $n$ processes using objects of that type along with registers, but there is no such algorithm for $n+1$ or more processes.

\medskip

For all $\mathcal{P}' \subseteq \{p_0, \ldots, p_{n-1}\}$, define $\mathcal{S}(\mathcal{P}')$ as the set of schedules that contain at most one instance of every process in $\mathcal{P}'$.
For instance, $\mathcal{S}\bigl(\{p_0, p_2\}\bigr) = \bigl\{\langle\rangle, p_0, p_2, p_0p_2, p_2p_0\bigr\}$, where $\langle\rangle$ denotes the empty schedule.
The following definition is adapted from Ruppert~\cite{r-00}.
A deterministic type $\mathcal{T}$ is \emph{$n$-discerning} if there exist
\begin{itemize}
\item a value $u$ of $\mathcal{T}$,
\item a partition of the processes $\{p_0, \ldots, p_{n-1}\}$ into two nonempty sets $T_0$ and $T_1$, and
\item an operation $o_i$ of type $\mathcal{T}$ for each process $p_i$
\end{itemize}
\noindent such that, for all $j \in \{0, \ldots, n-1\}$, $R_{0, j} \cap R_{1, j} = \emptyset$, where $R_{x, j}$ is the set of pairs $(r, v)$ for which there is a schedule of processes $\sigma \in \mathcal{S}\bigl(\{p_0, \ldots, p_{n-1}\}\bigr)$ where $p_j \in \sigma$, the first process in $\sigma$ is in $T_x$, and, if the processes in $\sigma$ apply their operations in order on an object of $\mathcal{T}$ with initial value $u$, then $r$ is the response of $p_j$'s operation $o_j$ and $v$ is the resulting value of the object.
Ruppert proved that a deterministic, readable type has consensus number $n$ if and only if it is $n$-discerning.

The next definition is adapted from Delporte-Gallet, Fatourou, Fauconnier, and Ruppert \cite{dffr-22}.
A deterministic type $\mathcal{T}$ is \emph{$n$-recording} if there exist
\begin{itemize}
\item a value $u$ of $\mathcal{T}$,
\item a partition of the processes $\{p_0, \ldots, p_{n-1}\}$ into two nonempty sets $T_0$ and $T_1$, and
\item an operation $o_i$ of type $\mathcal{T}$ for each process $p_i$
\end{itemize}
\noindent such that the following two properties are satisfied:
\begin{itemize}
\item $U_0 \cap U_1 = \emptyset$, where $U_x$ is the set of values $v$ for which there is a schedule of processes $\sigma \in \mathcal{S}\bigl(\{p_0, \ldots, p_{n-1}\}\bigr)$ where the first process in $\sigma$ is in $T_x$ and, if the processes in $\sigma$ apply their operations in order on an object of $\mathcal{T}$ with initial value $u$, then $v$ is the resulting value of the object, and
\item if $u \in U_x$, then $|T_{\bar{x}}| = 1$.
\end{itemize}


\section{$n$-Recording is Necessary for Readable Types}\label{sec:nec}

In this section we prove that, for any $n \geq 2$, if there is an $n$-process recoverable wait-free consensus algorithm using objects with deterministic types $\mathcal{T}_0, \mathcal{T}_1, \mathcal{T}_2 \ldots$, then at least one of these types $\mathcal{T}_i$ is $n$-recording.
Note that this applies to any deterministic types, not just readable ones.
Theorem~8 of \cite{dffr-22} says that, if a deterministic, readable type $\mathcal{T}$ is $n$-recording, then the recoverable consensus number of $\mathcal{T}$ is at least $n$.
Combining this with our result proves that the recoverable wait-free consensus hierarchy is robust for the set of all deterministic, readable types.

\smallskip

We use a valency argument to obtain our result in this section.
Valency arguments were first used by Fischer, Lynch, and Paterson \cite{flp-85} to show that solving wait-free binary consensus among $2$ or more processes is impossible in the asynchronous message passing model, even if only $1$ process may crash.
They define a configuration as bivalent if, for all $v \in \{0, 1\}$, there exists an execution from that configuration in which $v$ is decided by some process.
If a configuration is not bivalent, then it is univalent.
A valency argument may also be used to show that solving wait-free binary consensus among $2$ or more processes is impossible in the asynchronous shared memory model using only registers as follows.
First, using validity, it can be argued that any initial configuration in which some process has input $0$ and another has input $1$ must be bivalent.
Let $C$ be a bivalent initial configuration.
Since the algorithm is wait-free, there must be an execution from $C$ that leads to a critical configuration $C'$, that is, a bivalent configuration such that, if any process takes a step in that configuration, then the resulting configuration is univalent.
Since $C'$ is bivalent, there must be a process $p_0$ such that every execution from $C'p_0$ decides $0$ and a process $p_1$ such that every execution from $C'p_1$ decides $1$.
Using a case analysis on which register $p_0$ and $p_1$ are poised to access in $C'$ and which operation they are poised to apply, it can be shown that $C'p_0$ and $C'p_1$ must have the same valency, which is a contradiction.

As noted by Golab \cite{g-20}, valency arguments are more difficult to apply when processes may recover after crashing because infinitely long executions in which no process decides are permissible as long as every process that takes infinitely many steps in the execution also crashes infinitely often.
Furthermore, if we construct a ``critical'' configuration $C'$ by considering only crash-free executions, then it might be the case that, for every process $p_i$, the configuration $C'p_i$ is $v$-univalent; in other words, any execution from $C'$ in which $\bar{v}$ is decided begins with a crash.
To get around this, previous authors \cite{g-20,dffr-22} have considered executions in which only a fixed set of processes may crash.
We generalize this approach by considering executions in which the number of crashes that a process may experience is proportional the number of steps taken by processes with smaller identifiers.

\medskip

Consider a recoverable wait-free consensus algorithm for the $n \geq 2$ processes $p_0, \ldots, p_{n-1}$ using objects of deterministic types $\mathcal{T}_0, \mathcal{T}_{1}, \mathcal{T}_2, \ldots$.
Define $\mathcal{P} = \{p_0, \ldots, p_{n-1}\}$ as the set of all processes.



We now define our notion of valency.
Let $\mathcal{A}$ be a set of executions from some configuration $C$.
Consider any execution $\alpha \in \mathcal{A}$.
A set of processes $\mathcal{P}' \subseteq \mathcal{P}$ is \emph{bivalent in $\alpha$ with respect to $\mathcal{A}$} if, for all $v \in \{0, 1\}$, there exists an execution $\beta$ from $C\alpha$ containing only steps by processes in $\mathcal{P}'$ such that $\alpha\beta \in \mathcal{A}$ and some process has decided $v$ in $C\alpha\beta$.
$\mathcal{P}'$ is \emph{$v$-univalent in $\alpha$ with respect to $\mathcal{A}$} if $\alpha \in \mathcal{A}$ and, for every execution $\beta$ from $C\alpha$ containing only steps by processes in $\mathcal{P}'$ such that $\alpha\beta \in \mathcal{A}$, if some process has decided a value in $C\alpha\beta$, then that value is $v$.
If the set of all processes $\mathcal{P}$ is bivalent (resp. $v$-univalent) in $\alpha$ with respect to $\mathcal{A}$, then we say $\alpha$ is \emph{bivalent (resp. $v$-univalent) with respect to $\mathcal{A}$}.
If $\alpha$ is $v$-univalent with respect to $\mathcal{A}$ for some $v \in \{0, 1\}$, then we say $\alpha$ is \emph{univalent with respect to $\mathcal{A}$}.
If $\mathcal{P'}$ is bivalent (resp. $v$-univalent) in the empty execution from $C$ with respect to $\mathcal{A}$, then we say $\mathcal{P}'$ is \emph{bivalent (resp. $v$-univalent) in $C$ with respect to $\mathcal{A}$}.
If $\mathcal{P}$ is bivalent (resp. $v$-univalent) in $C$ with respect to $\mathcal{A}$, then we say $C$ is \emph{bivalent (resp. $v$-univalent) with respect to $\mathcal{A}$}.



The execution $\alpha$ is \emph{critical with respect to $\mathcal{A}$} if $\alpha$ is bivalent with respect to $\mathcal{A}$, and, for every nonempty execution $\beta$ such that $\alpha\beta \in \mathcal{A}$, $\alpha\beta$ is univalent with respect to $\mathcal{A}$.
If $\alpha$ is critical with respect to $\mathcal{A}$, then we say that process $p_i$ is \emph{on team $v$ in $C\alpha$ with respect to $\mathcal{A}$} if $\alpha p_i$ is $v$-univalent with respect to $\mathcal{A}$.
When $\mathcal{A}$ is clear from context, we will simply say $p_i$ is \emph{on team $v$ in $C\alpha$}.

Notice that we have defined valency and criticality as properties of executions rather than configurations (except that the valency of a configuration is defined with respect to a set of executions from itself).
We do this because not all executions leading to a certain configuration may be in the set of executions considered.
For example, consider a set of executions $\mathcal{A}$ from a configuration $C$.
Suppose that $\alpha \in \mathcal{A}$ is bivalent with respect to $\mathcal{A}$ and let $C' = C\alpha$.
Now suppose there exists some different execution $\beta$ such that $C' = C\beta$.
(Recall that a configuration consists only of the states of all processes and the values of all objects, so multiple different executions can lead from $C$ to $C'$.)
Even though $\alpha$ and $\beta$ both end in the configuration $C'$, it is not necessarily true that $\beta$ is bivalent with respect to $\mathcal{A}$.
In particular, it might true that $\beta \not\in \mathcal{A}$, in which case the valency of $\beta$ with respect to $\mathcal{A}$ is undefined.



\smallskip

Let $C$ be some initial configuration in which some process $p_i$ has input $0$ and another process $p_j$ has input $1$.
Consider an initial configuration $B_0$ in which every process has input $0$ and another initial configuration $B_1$ in which every process has input $1$.
Then $C \widesim{p_i} B_0$, $C \widesim{p_j} B_1$, and all of the objects have the same values in $C$, $B_0$, and $B_1$.
By validity, $p_i$ decides $0$ in its solo-terminating execution from $B_0$ and $p_j$ decides $1$ in its solo-terminating execution from $B_1$.
Therefore, $p_i$ decides $0$ in its solo-terminating execution from $C$ and $p_j$ decides $1$ in its solo-terminating execution from $C$.
Hence, $C$ is bivalent with respect to any set of executions from $C$ that contain the solo-terminating executions of $p_i$ and $p_j$ from $C$.
This gives us the following observation.

\begin{observation}\label{obs:bivinit}
Let $C$ be an initial configuration of an algorithm in which some process $p_i$ has input $0$ and another process $p_j$ has input $1$.
Then $C$ is bivalent with respect to any set of executions from $C$ that contains the solo-terminating executions by $p_i$ and $p_j$ from $C$.
\end{observation}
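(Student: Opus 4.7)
The plan is to verify the two conditions of bivalency directly from the validity of consensus together with indistinguishability arguments, using the solo-terminating executions that we are guaranteed lie in $\mathcal{A}$. Let $\mathcal{A}$ be any set of executions from $C$ that contains the solo-terminating execution $\alpha_i$ of $p_i$ from $C$ and the solo-terminating execution $\alpha_j$ of $p_j$ from $C$. To show $C$ is bivalent with respect to $\mathcal{A}$, we must exhibit, for each $v \in \{0,1\}$, an execution $\beta_v \in \mathcal{A}$ from $C$ in which some process has decided $v$ in $C\beta_v$.

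First, I would introduce the auxiliary initial configurations $B_0$ and $B_1$ in which every process has input $0$ and $1$ respectively, and observe that $C \widesim{p_i} B_0$ and $C \widesim{p_j} B_1$, and that all objects have identical values in $C$, $B_0$, and $B_1$ (since all three are initial configurations of the same algorithm). Because the algorithm is recoverable wait-free, $p_i$'s solo execution from $B_0$ terminates in an output state, and by the validity property of consensus this output must be $0$ (it is the only input present). By the indistinguishability principle quoted from \cite{ae-14}, the corresponding solo execution $\alpha_i$ from $C$ is indistinguishable to $p_i$ from $p_i$'s solo execution from $B_0$, so $p_i$ also decides $0$ in $C\alpha_i$. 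A symmetric argument using $B_1$ shows that $p_j$ decides $1$ in $C\alpha_j$.

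Finally, I would take $\beta_0 = \alpha_i$ and $\beta_1 = \alpha_j$. By assumption both lie in $\mathcal{A}$, each consists only of steps by a single process in $\mathcal{P}$, and in each the target value has been decided. This exactly witnesses bivalency of the empty execution from $C$ with respect to $\mathcal{A}$, so $C$ is bivalent with respect to $\mathcal{A}$. There is no real obstacle here: the only point requiring care is to make sure the witness executions actually belong to $\mathcal{A}$, which is why the hypothesis explicitly requires $\mathcal{A}$ to contain $\alpha_i$ and $\alpha_j$, and to keep track that bivalency is defined relative to $\mathcal{A}$ rather than absolutely.
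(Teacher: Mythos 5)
Your proposal is correct and follows essentially the same route as the paper: it introduces the all-$0$ and all-$1$ initial configurations $B_0$ and $B_1$, uses validity to pin down the outputs of the solo-terminating executions there, transfers them to $C$ via indistinguishability, and then uses the hypothesis that $\mathcal{A}$ contains both solo-terminating executions to witness bivalency. No gaps.
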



Consider some set of processes $\mathcal{P}' \subseteq \mathcal{P}$ and an execution $\alpha \in \mathcal{A}$ such that $\mathcal{P}'$ is univalent in $\alpha$ with respect to $\mathcal{A}$.
Let $\beta, \gamma$ be any executions containing only steps by $\mathcal{P}'$ such that $\alpha\beta, \alpha\beta\gamma \in \mathcal{A}$.
If $\mathcal{P}'$ is $v$-univalent in either $\alpha$ or $\alpha\beta$ with respect to $\mathcal{A}$ and some process has decided a value in $C\alpha\beta\gamma$, then that value must be $v$ by definition of $v$-univalent.
Hence, $\mathcal{P}'$ is $v$-univalent is both $\alpha$ and $\alpha\beta$ with respect to $\mathcal{A}$.
We formalize this in the following observation.

\begin{observation}\label{obs:stilluniv}
Let $\mathcal{P}' \subseteq \mathcal{P}$, $\alpha \in \mathcal{A}$, $\alpha\beta \in \mathcal{A}$, and suppose that $\mathcal{P}$ is univalent in $\alpha$ with respect to $\mathcal{A}$ and $\beta$ contains only steps by $\mathcal{P}'$.
Then $\mathcal{P}'$ is $v$-univalent in $\alpha$ with respect to $\mathcal{A}$ if and only if $\mathcal{P}'$ is $v$-univalent in $\alpha\beta$ with respect to $\mathcal{A}$.
\end{observation}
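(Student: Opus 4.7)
The plan is to prove both directions of the equivalence by directly unfolding the definition of $v$-univalence, using that $\beta$ contains only steps by $\mathcal{P}'$ so that extensions of $\alpha\beta$ by $\mathcal{P}'$ can be ``pulled back'' to extensions of $\alpha$ by $\mathcal{P}'$.

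For the forward direction, I would assume $\mathcal{P}'$ is $v$-univalent in $\alpha$ and take an arbitrary execution $\gamma$ from $C\alpha\beta$ that uses only steps by $\mathcal{P}'$ and satisfies $\alpha\beta\gamma \in \mathcal{A}$. Since $\beta$ also uses only steps by $\mathcal{P}'$, the concatenation $\beta\gamma$ is a $\mathcal{P}'$-only extension of $\alpha$ with $\alpha(\beta\gamma) = \alpha\beta\gamma \in \mathcal{A}$. The $v$-univalence of $\mathcal{P}'$ in $\alpha$ then immediately forces any process that has decided in $C\alpha\beta\gamma$ to have decided $v$, which is exactly $v$-univalence of $\mathcal{P}'$ in $\alpha\beta$.

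For the reverse direction, I would assume $\mathcal{P}'$ is $v$-univalent in $\alpha\beta$ and consider any $\mathcal{P}'$-only extension $\gamma$ of $\alpha$ with $\alpha\gamma \in \mathcal{A}$ in which some process has decided a value $w$; the goal is to conclude $w = v$. Since $\gamma$ uses only steps by $\mathcal{P}' \subseteq \mathcal{P}$, the extension $\gamma$ is also a $\mathcal{P}$-extension of $\alpha$ that witnesses a decision of $w$. Because $\mathcal{P}$ is univalent in $\alpha$, the only value for which $\mathcal{P}$ can be univalent is $w$, so $\mathcal{P}$ is $w$-univalent in $\alpha$. Applying the forward direction already proved, with $\mathcal{P}$ playing the role of $\mathcal{P}'$, I obtain that $\mathcal{P}$ is $w$-univalent in $\alpha\beta$. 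Any $\mathcal{P}'$-only extension of $\alpha\beta$ is in particular a $\mathcal{P}$-only extension of $\alpha\beta$, so a decision along such an extension must be both $v$ (by assumption) and $w$ (by $w$-univalence of $\mathcal{P}$ at $\alpha\beta$), forcing $w = v$.

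The main obstacle is the reverse direction: the hypothesis gives control over decisions along extensions of $\alpha\beta$, but the conclusion requires control over decisions along extensions of $\alpha$, which could in principle diverge from $\beta$ immediately. The key trick is to lift $\mathcal{P}$-univalence in $\alpha$ to $\mathcal{P}$-univalence in $\alpha\beta$ using the forward direction, so that the unique candidate valency $w$ determined by $\gamma$ can be matched against the assumed valency $v$ through the shared lens of $\mathcal{P}'$-extensions of $\alpha\beta$.
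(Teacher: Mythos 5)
Your proof is correct and is essentially the argument the paper gives in the paragraph preceding the observation: both directions come down to the fact that, because $\beta$ contains only steps by $\mathcal{P}'$, every $\mathcal{P}'$-only extension $\gamma$ of $\alpha\beta$ yields the $\mathcal{P}'$-only extension $\beta\gamma$ of $\alpha$, with the assumed univalence of $\mathcal{P}$ in $\alpha$ pinning down the single candidate value in the reverse direction. The one point you (and the paper) leave implicit is that your final step needs some decided $\mathcal{P}'$-only extension of $\alpha\beta$ to actually exist in $\mathcal{A}$ in order to force $w = v$; in every application this is supplied by recoverable wait-freedom together with Observation~\ref{obs:crashfreesched}, since a crash-free solo-terminating run of a process in $\mathcal{P}'$ from $C\alpha\beta$ remains in $\mathcal{A}$ and decides.
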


We now define the set of executions we use to perform our valency argument.
As mentioned earlier, in the executions we consider, the maximum number of crashes allowed by any process $p_i$ is proportional to the number of steps by processes $p_0, \ldots, p_{i-1}$.

More formally, for any configuration $C$ and any nonnegative integer $z$, define $\mathcal{E}_z(C)$ as the set of all executions $\alpha$ from $C$ that contain no crashes by $p_0$ and in which, for every process $p_i \in \{p_1, \ldots, p_{n-1}\}$, the number of crashes by $p_i$ is no greater than $zn$ times the number of steps collectively taken by $p_0, \ldots, p_{i-1}$ in $\alpha$.
Define $\mathcal{E}^\star_z(C) \subset \mathcal{E}_z(C)$ as the set of all executions $\alpha$ from $C$ that contain no crashes by $p_0$ and in which, for every process $p_i \in \{p_1, \ldots, p_{n-1}\}$ and every prefix $\alpha'$ of $\alpha$, the number of crashes by $p_i$ is no greater than $zn$ times the number of steps collectively taken by $p_0, \ldots, p_{i-1}$ in $\alpha'$.

Notice that $\mathcal{E}^\star_z(C)$ is prefix-closed but $\mathcal{E}_z(C)$ is not.
For example, if $n = 2$, then $\textit{exec}(C, p_1c_1p_0) \in \mathcal{E}_1(C)$ but $\textit{exec}(p_1c_1p_0) \not\in \mathcal{E}^\star_1(C)$ because the number of crashes by $p_1$ in $\textit{exec}(p_1c_1)$ is greater than $2$ times the number of steps taken by $p_0$ in that execution.


The following two observations follow from the the definitions of $\mathcal{E}_z$ and $\mathcal{E}^\star_z$.

\begin{observation}\label{obs:closedconcat}
For any configuration $C$, any integer $z > 0$, and any $\mathcal{A} \in \{\mathcal{E}_z, \mathcal{E}^\star_z\}$, if $\alpha \in \mathcal{A}(C)$ and $\beta \in \mathcal{A}(C\alpha)$, then $\alpha\beta \in \mathcal{A}(C)$.
\end{observation}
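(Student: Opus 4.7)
The plan is to prove both set membership claims by a direct bookkeeping argument on step counts and crash counts, using the fact that these counts are additive under concatenation of executions.

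First, fix the notation: for any execution $\gamma$ and any process $p_i$, let $c_i(\gamma)$ denote the number of crashes by $p_i$ in $\gamma$, and let $S_i(\gamma)$ denote the total number of steps taken collectively by $p_0, \ldots, p_{i-1}$ in $\gamma$. Since $\alpha\beta$ is formed by appending the events of $\beta$ to the events of $\alpha$, we immediately get the additivity identities $c_i(\alpha\beta) = c_i(\alpha) + c_i(\beta)$ and $S_i(\alpha\beta) = S_i(\alpha) + S_i(\beta)$ for every $i$.

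To handle the case $\mathcal{A} = \mathcal{E}_z$, I would verify the two defining conditions of $\mathcal{E}_z(C)$ for $\alpha\beta$. First, $c_0(\alpha\beta) = c_0(\alpha) + c_0(\beta) = 0 + 0 = 0$ since neither $\alpha \in \mathcal{E}_z(C)$ nor $\beta \in \mathcal{E}_z(C\alpha)$ contains a crash by $p_0$. Second, for any $i \geq 1$, the hypotheses give $c_i(\alpha) \leq zn \cdot S_i(\alpha)$ and $c_i(\beta) \leq zn \cdot S_i(\beta)$, so summing yields $c_i(\alpha\beta) \leq zn \cdot S_i(\alpha\beta)$, as required.

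To handle the case $\mathcal{A} = \mathcal{E}^\star_z$, I would verify the same inequality for every prefix of $\alpha\beta$ via a case split. A prefix of $\alpha\beta$ is either a prefix $\alpha'$ of $\alpha$, or has the form $\alpha\beta'$ where $\beta'$ is a prefix of $\beta$. In the first case, the inequality follows directly from $\alpha \in \mathcal{E}^\star_z(C)$. In the second case, the $\mathcal{E}^\star_z$ condition on $\alpha$ gives $c_i(\alpha) \leq zn \cdot S_i(\alpha)$ and the $\mathcal{E}^\star_z$ condition on $\beta$ applied to its prefix $\beta'$ gives $c_i(\beta') \leq zn \cdot S_i(\beta')$; adding these and applying additivity gives $c_i(\alpha\beta') \leq zn \cdot S_i(\alpha\beta')$.

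Since the argument is entirely bookkeeping, there is no significant obstacle. The only thing to watch is that the $\mathcal{E}^\star_z$ case genuinely requires the prefix condition on both $\alpha$ and $\beta$, not merely on $\alpha\beta$ as a whole; the case split above makes this explicit. I expect the full write-up to be only a few lines.
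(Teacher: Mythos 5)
Your proof is correct and is exactly the direct verification the paper intends: the paper states this observation without proof, asserting it follows from the definitions of $\mathcal{E}_z$ and $\mathcal{E}^\star_z$, and your additivity-of-counts argument (with the prefix case split for $\mathcal{E}^\star_z$) is precisely that verification spelled out.
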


\begin{observation}\label{obs:crashfreesched}
For any configuration $C$, any integer $z > 0$, any $\mathcal{A} \in \{\mathcal{E}_z, \mathcal{E}^\star_z\}$, and any execution or schedule $\sigma$ that contains no crashes, if $\alpha \in \mathcal{A}(C)$, then $\alpha\sigma \in \mathcal{A}(C)$.
\end{observation}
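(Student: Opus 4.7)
The plan is to verify the defining inequality of each family $\mathcal{A}$ directly, exploiting the key fact that $\sigma$ contributes no crashes but may contribute steps. Let me describe the argument I would write.

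First I would handle $\mathcal{A} = \mathcal{E}_z$. Since $\alpha \in \mathcal{E}_z(C)$, $\alpha$ contains no crashes by $p_0$, and $\sigma$ contains no crashes at all, so $\alpha\sigma$ contains no crashes by $p_0$. Fix any $p_i \in \{p_1, \ldots, p_{n-1}\}$, and let $k_i$ and $s_i$ denote the number of crashes by $p_i$ and the number of steps collectively taken by $p_0, \ldots, p_{i-1}$ in $\alpha$, respectively. Let $k'_i$ and $s'_i$ denote the same quantities in $\alpha\sigma$. Because $\sigma$ adds no crashes, $k'_i = k_i$, whereas $s'_i \geq s_i$ since $\sigma$ may add steps by $p_0, \ldots, p_{i-1}$. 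From $\alpha \in \mathcal{E}_z(C)$ we have $k_i \leq z n \cdot s_i$, hence $k'_i = k_i \leq zn \cdot s_i \leq zn \cdot s'_i$. Therefore $\alpha\sigma \in \mathcal{E}_z(C)$.

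Next I would handle $\mathcal{A} = \mathcal{E}^\star_z$. The argument must hold at every prefix. Every prefix of $\alpha\sigma$ is either a prefix of $\alpha$ or of the form $\alpha\tau$ where $\tau$ is a prefix of $\sigma$. Prefixes of $\alpha$ are already in $\mathcal{E}^\star_z(C)$ because $\alpha \in \mathcal{E}^\star_z(C)$ and this family is prefix-closed by definition. For a prefix $\alpha\tau$, note that $\tau$ contains no crashes (being a prefix of $\sigma$), so the same step-count argument as in the previous paragraph, applied to $\alpha\tau$ in place of $\alpha\sigma$, yields the required inequality for every $p_i$ and shows $p_0$ never crashes in $\alpha\tau$. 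Hence $\alpha\sigma \in \mathcal{E}^\star_z(C)$.

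This is essentially a bookkeeping observation, so I do not expect a genuine obstacle. The only subtlety worth being explicit about is that an ``execution'' $\sigma$ from $C\alpha$ is a well-defined sequence of events whose associated schedule is what gets appended to the schedule of $\alpha$; once one fixes this, the inequality is immediate from monotonicity of the step counts $s_i$ in the prefix order.
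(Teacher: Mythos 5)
Your proof is correct and matches the paper's intent: the paper states this observation without proof, asserting that it ``follows from the definitions of $\mathcal{E}_z$ and $\mathcal{E}^\star_z$,'' and your direct verification (crash counts unchanged, step counts monotone in the prefix order, plus the case split on prefixes for $\mathcal{E}^\star_z$) is precisely the routine check being elided.
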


%

Consider a configuration $C$, $\mathcal{A} \in \{\mathcal{E}_z, \mathcal{E}^\star_z\}$, and $\alpha \in \mathcal{A}(C)$ such that $\mathcal{P}' \subseteq \mathcal{P}$ is $v$-univalent in $\alpha$ with respect to $\mathcal{A}(C)$.
Also consider any configuration $C'$ such that $C'$ and $C\alpha$ are indistinguishable to $\mathcal{P}' \subseteq \mathcal{P}$ and all of the objects have the same values in $C'$ and $C\alpha$.
Let $\beta' \in \mathcal{A}(C')$ be an execution that contains only steps by $\mathcal{P}'$.
Then there exists an execution $\beta$ from $C\alpha$ containing only steps by $\mathcal{P}'$ such that $\beta$ and $\beta'$ are indistinguishable to $\mathcal{P}'$.
Furthermore, since $\beta' \in \mathcal{A}(C')$, we must also have $\beta \in \mathcal{A}(C\alpha)$ by the definitions of $\mathcal{E}_z$ and $\mathcal{E}^\star_z$.
By Observation~\ref{obs:closedconcat}, we have $\alpha\beta \in \mathcal{A}(C)$.
Since $\alpha$ is $v$-univalent with respect to $\mathcal{A}(C)$, if any process has decided a value in $C\alpha\beta$, then that value is $v$.
Therefore, if any process has decided a value in $C'\beta'$, then that value is $v$.
Thus, $\mathcal{P}'$ is $v$-univalent in $C'$ with respect to $\mathcal{A}(C')$.
We formalize this below.


\begin{observation}\label{obs:indistuniv}
For any $\mathcal{P}' \subseteq \mathcal{P}$, any configuration $C$, any integer $z > 0$, any $\mathcal{A} \in \{\mathcal{E}_z, \mathcal{E}^\star_z\}$, and any $\alpha \in \mathcal{A}(C)$, if $\mathcal{P}'$ is $v$-univalent in $\alpha$ with respect to $\mathcal{A}(C)$, then for any configuration $C'$ such that $C' \widesim{\mathcal{P}'} C\alpha$ and all of the objects have the same values in $C'$ and $C\alpha$, $\mathcal{P}'$ is $v$-univalent in $C'$ with respect to $\mathcal{A}(C')$.
\end{observation}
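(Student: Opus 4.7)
The plan is to convert the informal argument given just before the observation statement into a formal proof, since the informal sketch already lays out the essential structure: lift any execution from $C'$ to an indistinguishable execution from $C\alpha$, then transfer $v$-univalence back.

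First, I would fix an arbitrary execution $\beta' \in \mathcal{A}(C')$ containing only steps by processes in $\mathcal{P}'$, and my goal would be to show that no process has decided $\bar v$ in $C'\beta'$. By hypothesis, $C' \widesim{\mathcal{P}'} C\alpha$, $\beta'$ contains only steps by $\mathcal{P}'$, and the objects accessed by $\mathcal{P}'$ in $\beta'$ have the same values in $C'$ and $C\alpha$ (since \emph{all} objects do). Invoking the indistinguishability result cited from \cite{ae-14} in the model section, I obtain an execution $\beta$ from $C\alpha$, containing only steps by $\mathcal{P}'$, such that $\beta \widesim{\mathcal{P}'} \beta'$. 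In particular, $\beta$ and $\beta'$ have the same schedule.

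Second, I would argue $\beta \in \mathcal{A}(C\alpha)$. The membership condition defining $\mathcal{E}_z$ and $\mathcal{E}^\star_z$ depends only on the schedule (counts of steps and crashes per process), not on object values or process states. Since $\beta$ and $\beta'$ share a schedule and $\beta' \in \mathcal{A}(C')$, we get $\beta \in \mathcal{A}(C\alpha)$. Applying Observation~\ref{obs:closedconcat} to $\alpha \in \mathcal{A}(C)$ and $\beta \in \mathcal{A}(C\alpha)$ yields $\alpha\beta \in \mathcal{A}(C)$.

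Third, I would close the argument by transferring univalence across the indistinguishability. Since $\mathcal{P}'$ is $v$-univalent in $\alpha$ with respect to $\mathcal{A}(C)$, $\alpha\beta \in \mathcal{A}(C)$, and $\beta$ uses only steps by $\mathcal{P}'$, any value decided in $C\alpha\beta$ equals $v$. If some process $p_i \in \mathcal{P}'$ has decided in $C'\beta'$, then by $\beta \widesim{\mathcal{P}'} \beta'$ the same process has decided the same value in $C\alpha\beta$, and that value must be $v$. Since $\beta'$ was arbitrary, $\mathcal{P}'$ is $v$-univalent in $C'$ with respect to $\mathcal{A}(C')$.

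There is no real obstacle; the only subtlety worth stating carefully is the observation that membership in $\mathcal{E}_z$ and $\mathcal{E}^\star_z$ is purely a schedule property, so indistinguishable executions are simultaneously in or out of $\mathcal{A}$. Everything else is a direct application of Observation~\ref{obs:closedconcat}, the indistinguishability lemma, and the definition of $v$-univalence.
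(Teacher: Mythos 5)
Your proof is correct and follows essentially the same route as the paper, which establishes this observation in the paragraph immediately preceding its statement: lift $\beta'$ to an indistinguishable execution $\beta$ from $C\alpha$ via the result cited from the model section, note that membership in $\mathcal{E}_z$ or $\mathcal{E}^\star_z$ depends only on the schedule so $\beta \in \mathcal{A}(C\alpha)$, apply Observation~\ref{obs:closedconcat}, and transfer the decided value back through indistinguishability. Your explicit remark that membership in $\mathcal{A}$ is purely a schedule property is a welcome clarification of a step the paper leaves implicit.
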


In the following lemma, we will show that if $C$ is bivalent with respect to $\mathcal{E}^\star_z(C)$ for some integer $z > 0$, then we can construct an execution $\alpha \in \mathcal{E}^\star_z(C)$ that is critical with respect to $\mathcal{E}^\star_z(C)$, similar to known valency arguments for consensus.
We will also prove a slightly different property: for any $z' \geq z$, there is an execution $\beta \in \mathcal{E}_{z'}(C)$ such that $C\beta$ is bivalent with respect to $\mathcal{E}^\star_{z}(C\beta)$ and, for every execution $\gamma$ such that $\beta\gamma \in \mathcal{E}_{z'}(C)$, the configuration $C\beta\gamma$ is univalent with respect to $\mathcal{E}^\star_z(C\beta\gamma)$.
In other words, it is not possible to infinitely extend an execution in $\mathcal{E}_{z'}(C)$ while keeping the resulting configuration $C'$ bivalent with respect to $\mathcal{E}^\star_z(C')$.
While this is different from our definition of a critical execution, the proof that $\beta \in \mathcal{E}_{z'}(C)$ exists is largely the same as the proof that $\alpha \in \mathcal{E}^\star_z(C)$ exists.
We formally prove the existence of these executions in the following lemma.

\begin{lemma}\label{lem:eventualcrit}
Let $C$ be a configuration that is bivalent with respect to $\mathcal{E}^\star_z(C)$, for some integer $z > 0$.
Then

\begin{enumerate}[label=(\alph*)]
	\item there exists a finite execution $\alpha \in \mathcal{E}^\star_z(C)$ such that $\alpha$ is critical with respect to $\mathcal{E}^\star_z(C)$, and\label{lem:eventualcrit:a}
	\item for any $z' \geq z$, there exists a finite execution $\beta \in \mathcal{E}_{z'}(C)$ such that $C\beta$ is bivalent with respect to $\mathcal{E}^\star_z(C\beta)$ and, for every nonempty finite execution $\gamma$ such that $\beta\gamma \in \mathcal{E}_{z'}(C)$, the configuration $C\beta\gamma$ is univalent with respect to $\mathcal{E}^\star_z(C\beta\gamma)$.\label{lem:eventualcrit:b}
\end{enumerate}
\end{lemma}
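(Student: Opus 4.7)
Both parts are proved by contradiction using an iterated-extension construction that preserves bivalence, combined with a novel inductive use of the crash-budget constraint together with recoverable wait-freedom to rule out the existence of an infinite such extension. This parallels the classical FLP approach to producing a critical configuration, but with ``fairness'' supplied by the crash budget instead of by hand.

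For part (a), suppose no critical $\alpha$ exists. Starting from $\alpha_0 = $ the empty execution (bivalent by hypothesis on $C$), inductively pick, for each $k$, a nonempty $\beta_k$ such that $\alpha_k \beta_k \in \mathcal{E}^\star_z(C)$ and $\alpha_k \beta_k$ is bivalent with respect to $\mathcal{E}^\star_z(C)$; such a $\beta_k$ exists because $\alpha_k$ is not critical. Let $\alpha_\infty$ be the limit of the strictly lengthening sequence $\alpha_0, \alpha_1, \ldots$; prefix-closure of $\mathcal{E}^\star_z(C)$ together with Observation~\ref{obs:closedconcat} ensures that every finite prefix of $\alpha_\infty$ lies in $\mathcal{E}^\star_z(C)$ and is bivalent. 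I would then induct on $i$ to show that $p_i$ takes only finitely many steps and crashes only finitely many times in $\alpha_\infty$. For $i = 0$, the definition of $\mathcal{E}^\star_z$ gives that $p_0$ never crashes; if $p_0$ took infinitely many steps, recoverable wait-freedom would force $p_0$ to output a value at some finite prefix $\alpha_k$, and by agreement $\alpha_k$ would then be univalent, contradicting its bivalence. For $i > 0$, the inductive hypothesis makes $\sum_{j < i} s_j$ finite, so $\mathcal{E}^\star_z$ bounds the crashes of $p_i$ in $\alpha_\infty$ by $z n \sum_{j < i} s_j < \infty$; after $p_i$'s last crash, all further steps of $p_i$ form a single uninterrupted run from $p_i$'s initial state, which by recoverable wait-freedom must terminate with an output in finitely many steps, again contradicting bivalence. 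Hence $\alpha_\infty$ is actually finite, contradicting that each $\beta_k$ is nonempty.

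Part (b) follows by a nearly identical construction, using $\mathcal{E}_{z'}(C)$ as the extension set while maintaining bivalence of $C\beta_k$ with respect to $\mathcal{E}^\star_z(C\beta_k)$. Starting from $\beta_0 = $ the empty execution (whose terminal configuration $C$ is bivalent with respect to $\mathcal{E}^\star_z(C)$ by hypothesis), I would iteratively choose a nonempty $\gamma_k$ with $\beta_k \gamma_k \in \mathcal{E}_{z'}(C)$ and $C \beta_k \gamma_k$ bivalent with respect to $\mathcal{E}^\star_z(C \beta_k \gamma_k)$; such a $\gamma_k$ exists precisely because $\beta_k$ fails the conclusion of the lemma while still satisfying the bivalence invariant. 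Each $\beta_k$ lies in $\mathcal{E}_{z'}(C)$ as a whole, which is all that is needed (even though $\mathcal{E}_{z'}$ is not prefix-closed), since the monotone crash bound $c_i \le z' n \sum_{j < i} s_j$ passes to the limit $\beta_\infty$. The fact that $C\beta_k$ is bivalent with respect to $\mathcal{E}^\star_z(C\beta_k)$ at each step rules out any process having decided in $C\beta_k$ (the empty execution from $C\beta_k$ is in $\mathcal{E}^\star_z(C\beta_k)$ and would become univalent by agreement), so the same wait-freedom induction over $i$ delivers the contradiction.

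The main obstacle, and the place where care is required, is the correct use of recoverable wait-freedom during the inductive step: because recoverable wait-freedom permits infinite executions with infinitely many crashes, ``some process runs forever'' is not by itself a contradiction. The crash-budget constraint is what makes the argument work --- it forces the crashes of $p_i$ to stop at a finite time determined by the (finite, by induction) number of steps taken by $p_0, \ldots, p_{i-1}$, so that the tail of $p_i$'s activity is a single fresh run from its initial state, which wait-freedom then forces to terminate with an output. Threading this together with the non-prefix-closed set $\mathcal{E}_{z'}$ in part (b) is the delicate bookkeeping step.
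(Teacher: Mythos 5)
Your proposal is correct and follows essentially the same route as the paper's proof: build an infinite strictly increasing chain of bivalent executions, use the crash-budget constraint to locate a process that eventually takes infinitely many steps without crashing, and then invoke recoverable wait-freedom plus agreement to contradict bivalence. The only cosmetic difference is that you run an induction over process identifiers showing each $p_i$ acts only finitely often, whereas the paper equivalently picks the minimum-identifier process that acts infinitely often and derives the contradiction directly from it.
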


\begin{proof}
Suppose that either \ref{lem:eventualcrit:a} or \ref{lem:eventualcrit:b} is false.
We proceed by constructing infinitely many executions $\gamma_0, \gamma_1, \gamma_2, \ldots$ starting from $C$ such that $\gamma_i$ is a proper prefix of $\gamma_{i+1}$ for all $i$.
If \ref{lem:eventualcrit:a} is false, then $\gamma_i \in \mathcal{E}^\star_z(C)$ and $\gamma_i$ is bivalent with respect to $\mathcal{E}^\star_z(C)$ for all $i$.
On the other hand, if \ref{lem:eventualcrit:b} is false, then $\gamma_i \in \mathcal{E}_{z'}(C)$ and $C\gamma_i$ is bivalent with respect to $\mathcal{E}^\star_z(C\gamma_i)$ for all $i$.
In either case, define $\gamma_0$ as the empty execution from $C$.




Now let $i > 0$ and suppose that we have constructed the execution $\gamma_{i-1}$.
If \ref{lem:eventualcrit:a} is false, then the execution $\gamma_{i-1}$ is not critical with respect to $\mathcal{E}^\star_z(C)$.
Since $\gamma_{i-1}$ is bivalent with respect to $\mathcal{E}^\star_z(C)$, there exists a nonempty execution $\delta$ from $C\gamma_{i-1}$ such that $\gamma_{i-1}\delta \in \mathcal{E}^\star_z(C)$ and $\gamma_{i-1}\delta$ is bivalent with respect to $\mathcal{E}^\star_z(C)$.
Define $\gamma_i = \gamma_{i-1}\delta$.

Otherwise, \ref{lem:eventualcrit:b} is false.
Then $C\gamma_{i-1}$ is bivalent with respect to $\mathcal{E}^\star_z(C\gamma_{i-1})$ and $\gamma_{i-1} \in \mathcal{E}_{z'}(C)$.
Hence, by the negation of \ref{lem:eventualcrit:b}, there exists a nonempty execution $\delta$ from $C\gamma_{i-1}$ such that $\gamma_{i-1}\delta \in \mathcal{E}_{z'}(C)$ and $C\gamma_{i-1}\delta$ is bivalent with respect to $\mathcal{E}^\star_z(C\gamma_{i-1}\delta)$.
Define $\gamma_i = \gamma_{i-1}\delta$.
This completes the construction.


Since $\gamma_i$ is a proper prefix of $\gamma_{i+1}$, for all $i$, the executions $\gamma_0, \gamma_1, \gamma_2, \ldots$ are strictly increasing in length.
For all $i$, define $\alpha_{i}$ as the nonempty execution from $C\gamma_i$ to $C\gamma_{i+1}$, that is, $\gamma_{i+1} = \gamma_i\alpha_{i}$.
More generally, for any $j > i$, we have $\gamma_j = \gamma_i\alpha_i\alpha_{i+1}\ldots\alpha_{j-1}$.
Since each $\alpha_i$ is nonempty, there must exist a process $p_k$ such that, for every $i \geq 0$, there exists a $j \geq i$ such that $p_k$ either takes a step or crashes in $\alpha_{j}$.
Let $p_k$ be the process with the minimum identifier for which this property holds.
Then there is an $m \geq 0$ such that $p_0, \ldots, p_{k-1}$ neither take a step nor crash in $\alpha_{m'}$, for all $m' \geq m$.
Let $t$ be the number of steps taken collectively by $p_0, \ldots, p_{k-1}$ in $\gamma_{m}$, and let $f$ be the number of times that $p_k$ crashes in $\gamma_m$.
Since $z' \geq z$, we know that $\mathcal{E}^\star_z(C) \subset \mathcal{E}_{z'}(C)$ for all $C$, so $\gamma_m, \gamma_{m+1}, \ldots \in \mathcal{E}_{z'}(C)$. 
Hence, process $p_k$ crashes no more than $z'nt - f$ times in $\alpha_{m}\alpha_{m+1}\ldots$.
Therefore, there exists an $\ell \geq m$ such that $p_k$ does not crash in $\alpha_\ell\alpha_{\ell+1}\ldots$.
Hence, $p_k$ takes infinitely many steps without crashing in $\alpha_\ell\alpha_{\ell+1}\ldots$.
By recoverable wait-freedom, there exists an $r \geq \ell$ such that $p_k$ has decided in $C\gamma_r$.
However, in case \ref{lem:eventualcrit:a} is false, $\gamma_r$ is bivalent with respect to $\mathcal{E}^\star_z(C)$, and in case \ref{lem:eventualcrit:b} is false, $\gamma_r$ is bivalent with respect to $\mathcal{E}^\star_z(C\gamma_r)$.
In either case, there exists an execution $\beta_{\bar{v}}$ from $C\gamma_r$ such that some process has decided $\bar{v}$ in $C\gamma_r\beta_{\bar{v}}$.
This contradicts agreement.

\end{proof}

%


We will now prove some useful properties about critical executions.
For the following five lemmas, let $C$ be a configuration, let $z \geq 1$ be an integer, and let $\alpha \in \mathcal{E}_z^\star(C)$ such that $\alpha$ is critical with respect to $\mathcal{E}^\star_z(C)$.
We proceed by proving that there is at least one process on each team in $C\alpha$.

\begin{lemma}\label{lem:diffteams}
There is at least one process on team $0$ in $C\alpha$ and at least one process on team $1$ in $C\alpha$.
\end{lemma}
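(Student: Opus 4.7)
The plan is to directly construct, for each $v \in \{0, 1\}$, a process on team $v$ in $C\alpha$. Since $\alpha$ is critical and hence bivalent with respect to $\mathcal{E}^\star_z(C)$, and bivalence in this paper is defined using step-only (crash-free) extensions, there exist step-only executions $\beta_0$ and $\beta_1$ from $C\alpha$ with $\alpha\beta_0, \alpha\beta_1 \in \mathcal{E}^\star_z(C)$ such that some process decides $v$ in $C\alpha\beta_v$ for each $v \in \{0, 1\}$.

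No process has decided in $C\alpha$ itself: otherwise, by agreement, no extension of $\alpha$ could have any process decide the opposite value, contradicting bivalence. Hence each $\beta_v$ must be nonempty. Fixing $v \in \{0, 1\}$, I would let $p_{i_v}$ denote the process that takes the first step of $\beta_v$ and write $\beta_v = p_{i_v}\beta_v'$ where $\beta_v'$ is step-only. By Observation~\ref{obs:crashfreesched}, $\alpha p_{i_v} \in \mathcal{E}^\star_z(C)$, and $\alpha p_{i_v}\beta_v' = \alpha\beta_v \in \mathcal{E}^\star_z(C)$ has some process deciding $v$.

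By the criticality of $\alpha$, the nonempty extension $\alpha p_{i_v}$ is univalent, so it is $v'$-univalent for some $v' \in \{0, 1\}$. The step-only extension $\beta_v'$ of $\alpha p_{i_v}$ witnesses that some process decides $v$ in $C\alpha p_{i_v}\beta_v'$, which forces $v' = v$. Hence $p_{i_v}$ is on team $v$, giving a process on team $v$ in $C\alpha$ for each $v \in \{0, 1\}$.

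I do not anticipate any significant obstacle here: once one observes that bivalence is defined via crash-free extensions, the lemma follows by taking the first step of a $v$-deciding step-only extension and applying the criticality of $\alpha$ to pin down its valency.
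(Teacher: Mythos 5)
There is a genuine gap, and it comes from your reading of the definition of bivalence. You assert that ``bivalence in this paper is defined using step-only (crash-free) extensions,'' but the witnessing executions $\beta$ in the definitions of bivalent and $v$-univalent are allowed to contain crashes: the paper repeatedly applies $v$-univalence to extensions such as $\alpha p_iR_ic_{n-1}$ (e.g., in Lemma~\ref{lem:samevalrestrict}), and its own proof of this lemma explicitly argues that the $\bar{v}$-deciding witness ``begins with a step by some process $p_j \neq p_0$'' --- a sentence that would be pointless if crashes were excluded by definition. Under the correct reading, your argument breaks at the step where you take ``the process that takes the first step of $\beta_v$'': the first event of $\beta_{\bar{v}}$ might be a crash $c_i$, in which case criticality only tells you that $\alpha c_i$ is $\bar{v}$-univalent, and this does not exhibit any process on team $\bar{v}$ (team membership is defined via $\alpha p_j$ for a step by $p_j$).

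The missing content is precisely the paper's main effort in this proof: ruling out that the minority-value witness begins with a crash. The paper first shows $p_0$ is on some team $v$ (using Observation~\ref{obs:crashfreesched}: $\alpha p_0$ and $p_0$'s solo-terminating continuation are always in $\mathcal{E}^\star_z(C)$ because $p_0$ never crashes), and then shows that for every $i$ with $\alpha c_i \in \mathcal{E}^\star_z(C)$, the execution $\alpha c_i$ is also $v$-univalent --- because $C\alpha p_0 \widesim{p_0} C\alpha c_i p_0$ with identical object values, so $p_0$'s solo-terminating run decides $v$ from both, and $\alpha c_i$ is univalent by criticality. Only then can one conclude that the $\bar{v}$-witness must begin with a step by some $p_j \neq p_0$, which places $p_j$ on team $\bar{v}$. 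Your proof contains none of this indistinguishability argument, so as written it only establishes the existence of a process on team $v$ for whichever values $v$ happen to have a witness beginning with a step.
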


\begin{proof}
Observation~\ref{obs:crashfreesched} implies that $\alpha p_0 \in \mathcal{E}^\star_z(C)$.
Hence, $p_0$ is on some team $v \in \{0, 1\}$ in $C\alpha$.
Then $\alpha p_0$ is $v$-univalent with respect to $\mathcal{E}^\star_z(C)$.
Let $\delta$ be $p_0$'s solo-terminating execution from $C\alpha p_0$.
Observation~\ref{obs:crashfreesched} implies that $\alpha p_0\delta \in \mathcal{E}^\star_z(C)$, so $p_0$ has decided $v$ in $C\alpha p_0\delta$.

Consider any $i \in \{1, \ldots, n-1\}$ such that $\alpha c_i \in \mathcal{E}^\star_z(C)$.
Since $C\alpha$ is critical with respect to $\mathcal{E}^\star_z(C)$, the configuration $C\alpha c_i$ is univalent with respect to $\mathcal{E}^\star_z(C)$.
Since $\alpha c_i \in \mathcal{E}^\star_z(C)$, we also have $\alpha c_ip_0 \in \mathcal{E}^\star_z(C)$ by Observation~\ref{obs:crashfreesched}.
Hence, $C\alpha c_ip_0$ is also univalent with respect to $\mathcal{E}^\star_z(C)$ by Observation~\ref{obs:stilluniv}.

Notice that $C\alpha p_0 \widesim{p_0} C\alpha c_ip_0$ and all of the objects have the same values in these two configurations.
Hence, $p_0$ decides $v$ in its solo-terminating execution from $C\alpha c_ip_0$.
Equivalently, $p_0$ decides $v$ in its solo-terminating execution from $C\alpha c_i$.
Let $\gamma$ be $p_0$'s solo-terminating execution from $C\alpha c_i$.
By Observation~\ref{obs:crashfreesched}, $\alpha c_i\gamma \in \mathcal{E}_z^\star(C)$.
Since $p_0$ decides $v$ in $\gamma$ and $\alpha c_i$ is univalent with respect to $\mathcal{E}_z^\star(C)$, Observation~\ref{obs:stilluniv} implies that $\alpha c_i$ is $v$-univalent with respect to $\mathcal{E}_z^\star(C)$.

Since $\alpha$ is critical with respect to $\mathcal{E}^\star_z(C)$, it is bivalent with respect to $\mathcal{E}^\star_z(C)$.
Hence, there exists an execution $\beta_{\bar{v}}$ such that $\alpha\beta_{\bar{v}} \in \mathcal{E}^\star_z(C)$ and some process has decided $\bar{v}$ in $C\alpha\beta_{\bar{v}}$.
Since $\alpha p_0$ and $\alpha c_i$ are $v$-univalent with respect to $\mathcal{E}^\star_z(C)$ for all $i \in \{1, \ldots, n-1\}$ such that $\alpha c_i \in \mathcal{E}^\star_z(C)$, the execution $\beta_{\bar{v}}$ begins with a step by some process $p_j \neq p_0$.
Hence, $p_j$ is on team $\bar{v}$ in $C\alpha$.
\end{proof}

Notice that, since $\alpha$ is critical with respect to $\mathcal{E}^\star_z(C)$, we know that $\alpha$ is bivalent with respect to $\mathcal{E}^\star_z(C)$ by definition of critical.
It takes slightly more effort to show that $C\alpha$ is bivalent with respect to $\mathcal{E}^\star_z(C\alpha)$, which we do in the following lemma.


\begin{lemma}\label{lem:critisbiv}
$C\alpha$ is bivalent with respect to $\mathcal{E}^\star_z(C\alpha)$.
\end{lemma}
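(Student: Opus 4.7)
The plan is to use Lemma~\ref{lem:diffteams} together with Observation~\ref{obs:crashfreesched} to explicitly construct, for each $v \in \{0,1\}$, a crash-free execution from $C\alpha$ that lies in $\mathcal{E}^\star_z(C\alpha)$ and in which $v$ has been decided. The subtlety the lemma is addressing is that the bivalence of $\alpha$ is defined with respect to $\mathcal{E}^\star_z(C)$, while bivalence of $C\alpha$ requires extensions lying in $\mathcal{E}^\star_z(C\alpha)$. In general $\alpha\beta \in \mathcal{E}^\star_z(C)$ does not imply $\beta \in \mathcal{E}^\star_z(C\alpha)$, because the crashes of a process $p_i$ in $\beta$ are budgeted against steps of $p_0,\ldots,p_{i-1}$ taken \emph{only} in $\beta$, not in $\alpha\beta$. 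Restricting to crash-free extensions avoids this mismatch.

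By Lemma~\ref{lem:diffteams}, for each $v \in \{0,1\}$ there is a process $p_{j_v}$ on team $v$ in $C\alpha$; that is, $\alpha p_{j_v} \in \mathcal{E}^\star_z(C)$ and $\alpha p_{j_v}$ is $v$-univalent with respect to $\mathcal{E}^\star_z(C)$. Let $\gamma_v$ be the solo-terminating execution of $p_{j_v}$ from $C\alpha p_{j_v}$, which exists by recoverable wait-freedom since $\gamma_v$ contains no crashes and $p_{j_v}$ therefore cannot run forever without deciding. Applying Observation~\ref{obs:crashfreesched} (with $\mathcal{A} = \mathcal{E}^\star_z$), the crash-free schedule $p_{j_v}\gamma_v$ extends $\alpha$ to an execution $\alpha p_{j_v}\gamma_v \in \mathcal{E}^\star_z(C)$. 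Since $\alpha p_{j_v}$ is $v$-univalent with respect to $\mathcal{E}^\star_z(C)$ and $p_{j_v}$ has decided some value in $C\alpha p_{j_v}\gamma_v$, that value must be $v$.

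Now view $\beta_v = p_{j_v}\gamma_v$ as an execution from $C\alpha$. The empty execution lies in $\mathcal{E}^\star_z(C\alpha)$, so applying Observation~\ref{obs:crashfreesched} once more (this time from $C\alpha$, with the crash-free schedule $\beta_v$) yields $\beta_v \in \mathcal{E}^\star_z(C\alpha)$. Since $p_{j_v}$ has decided $v$ in $C\alpha\beta_v$, we have produced, for each $v \in \{0,1\}$, an execution $\beta_v \in \mathcal{E}^\star_z(C\alpha)$ in which $v$ has been decided. Therefore $C\alpha$ is bivalent with respect to $\mathcal{E}^\star_z(C\alpha)$, which completes the proof.

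The only delicate point is confirming that each invocation of Observation~\ref{obs:crashfreesched} is legitimate, which is guaranteed precisely because we work with solo-terminating (hence crash-free) continuations; this is the reason the argument routes through Lemma~\ref{lem:diffteams} rather than through the bivalence of $\alpha$ applied to an arbitrary extension $\beta_v$ furnished by the definition of bivalence.
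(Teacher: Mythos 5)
Your proof is correct and follows essentially the same route as the paper's: invoke Lemma~\ref{lem:diffteams} to obtain a process on each team, extend each by its crash-free solo-terminating execution, and apply Observation~\ref{obs:crashfreesched} twice --- once to conclude the decided value is $v$ via univalence with respect to $\mathcal{E}^\star_z(C)$, and once to place the extension in $\mathcal{E}^\star_z(C\alpha)$. Your explicit remark about why crash-freeness is what makes the passage from $\mathcal{E}^\star_z(C)$ to $\mathcal{E}^\star_z(C\alpha)$ legitimate is a nice articulation of the point the paper's proof handles implicitly.
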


\begin{proof}
By Lemma~\ref{lem:diffteams}, there is a process $p_i$ on team $0$ in $C\alpha$ and a process $p_j$ on team $1$ in $C\alpha$.
Let $\delta_i$ be $p_i$'s solo-terminating execution from $C\alpha p_i$, and let $\delta_j$ be $p_j$'s solo-terminating execution from $C\alpha p_j$.
Since $p_i$ is on team $0$ in $C\alpha$ and $p_j$ is on team $1$ in $C\alpha$, the execution $\alpha p_i$ is $0$-univalent with respect to $\mathcal{E}^\star_z(C)$ and $\alpha p_j$ is $1$-univalent with respect to $\mathcal{E}^\star_z(C)$.
Observation~\ref{obs:crashfreesched} implies that $\alpha p_i\delta_i \in \mathcal{E}^\star_z(C)$ and $\alpha p_j\delta_j \in \mathcal{E}^\star_z(C)$.
Since $\alpha p_i$ is $0$-univalent with respect to $\mathcal{E}^\star_z(C)$ and $\alpha p_j$ is $1$-univalent with respect to $\mathcal{E}^\star_z(C)$, $p_i$ has decided $0$ in $C\alpha p_i\delta_i$ and $p_j$ has decided $1$ in $C\alpha p_j\delta_j$.
Observation~\ref{obs:crashfreesched} implies that $\delta_i, \delta_j \in \mathcal{E}^\star_z(C\alpha)$.
Hence, $C\alpha$ is bivalent with respect to $\mathcal{E}^\star_z(C\alpha)$.
\end{proof}

We now prove that every process is poised to access the same object in $C\alpha$.
The proof of this is essentially identical to typical proofs of the same property for wait-free consensus algorithms.

\begin{lemma}\label{lem:sameobject}
Every process is poised to apply an operation to the same object in $C\alpha$.
\end{lemma}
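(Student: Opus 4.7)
My approach is a standard Fischer--Lynch--Paterson commutativity argument, adapted to the restricted execution set $\mathcal{E}_z^\star(C)$. The plan is to suppose for contradiction that two processes are poised to access different objects in $C\alpha$ and then show that the order in which these two processes take their next steps is inconsequential to the resulting configuration; this clashes with the fact that one ordering is $0$-univalent while the other is $1$-univalent.

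First, I would isolate a pair of processes that access different objects and lie on different teams. Suppose that $p_i$ and $p_j$ are poised to access distinct objects $O_i$ and $O_j$ in $C\alpha$. By Lemma~\ref{lem:diffteams}, at least one process sits on each team in $C\alpha$. If $p_i$ and $p_j$ already lie on different teams, we are done. Otherwise, both lie on some team $v$, and any process $p_k$ on team $\bar{v}$ (which exists by Lemma~\ref{lem:diffteams}) must be poised to access an object distinct from at least one of $O_i$ and $O_j$, since $O_i \neq O_j$. Pairing $p_k$ with whichever of $p_i, p_j$ it accesses a different object from yields the desired pair.

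Relabeling so that $p_i$ is on team $0$ and $p_j$ is on team $1$, and writing $O_i \neq O_j$ for the distinct objects they access, I would proceed as follows. Each of $\alpha p_i$, $\alpha p_j$, $\alpha p_i p_j$, and $\alpha p_j p_i$ lies in $\mathcal{E}_z^\star(C)$ by Observation~\ref{obs:crashfreesched}, since each appended schedule is crash-free. By the definition of teams, $\alpha p_i$ is $0$-univalent and $\alpha p_j$ is $1$-univalent with respect to $\mathcal{E}_z^\star(C)$, and Observation~\ref{obs:stilluniv} propagates these to show that $\alpha p_i p_j$ is $0$-univalent and $\alpha p_j p_i$ is $1$-univalent with respect to $\mathcal{E}_z^\star(C)$. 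Because $p_i$ and $p_j$ are poised to access different objects, their operations act on disjoint parts of the shared state and commute, so $C\alpha p_i p_j$ and $C\alpha p_j p_i$ are the same configuration. Taking $p_0$'s solo-terminating execution $\delta$ from this common configuration and applying Observations~\ref{obs:crashfreesched} and~\ref{obs:closedconcat}, both $\alpha p_i p_j \delta$ and $\alpha p_j p_i \delta$ lie in $\mathcal{E}_z^\star(C)$; univalence then forces $p_0$'s decision in $\delta$ to be both $0$ and $1$, a contradiction.

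The main obstacle compared to the classical wait-free valency proof is ensuring that every execution I construct stays within the restricted set $\mathcal{E}_z^\star(C)$, but this is handled transparently by Observation~\ref{obs:crashfreesched}: every schedule I append in the argument is crash-free, and crash-free extensions preserve membership in $\mathcal{E}_z^\star(C)$.
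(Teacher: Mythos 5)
Your proof is correct and follows essentially the same route as the paper: the standard FLP commutativity argument on a team-$0$/team-$1$ pair accessing different objects, with Observations~\ref{obs:crashfreesched} and~\ref{obs:stilluniv} keeping all constructed executions inside $\mathcal{E}^\star_z(C)$. The only cosmetic difference is that you first reduce an arbitrary distinct-object pair to one straddling the two teams, whereas the paper takes an arbitrary team-$0$ process and an arbitrary team-$1$ process directly and lets the quantification over all such pairs do that work.
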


\begin{proof}
By Lemma~\ref{lem:diffteams}, there is a process that is on team $0$ in $C\alpha$ and there is a process that is on team $1$ in $C\alpha$.
Let $p_i$ be any process that is on team $0$ in $C\alpha$ and let $p_j$ be any process that is on team $1$ in $C\alpha$.
If $p_i$ and $p_j$ are poised to access different objects in $C\alpha$, then $C\alpha p_ip_j = C\alpha p_jp_i$.
Hence, $p_i$ decides the same value in its solo-terminating executions from $C\alpha p_ip_j$ and $C\alpha p_jp_i$.
Since $\alpha p_i$ is $0$-univalent with respect to $\mathcal{E}^\star_z(C)$ and $\alpha p_i p_j \in \mathcal{E}^\star_z(C)$ by Observation~\ref{obs:crashfreesched}, $\alpha p_ip_j$ is also $0$-univalent with respect to $\mathcal{E}^\star_z(C)$ by Observation~\ref{obs:stilluniv}.
Hence, $p_i$ decides $0$ in its solo-terminating execution from $C\alpha p_i p_j$.
Therefore, $p_i$ also decides $0$ in its solo-terminating execution from $C\alpha p_j p_i$.
However, since $\alpha p_j$ is $1$-univalent with respect to $\mathcal{E}^\star_z(C)$ and $\alpha p_jp_i \in \mathcal{E}^\star_z(C)$ by Observation~\ref{obs:crashfreesched}, the execution $\alpha p_jp_i$ is also $1$-univalent with respect to $\mathcal{E}^\star_z(C)$ by Observation~\ref{obs:stilluniv}.
This is a contradiction.
\end{proof}

For the following two lemmas, let $O$ be the object that every process is poised to access in $C\alpha$.
Such an object exists by Lemma~\ref{lem:sameobject}.

In the next lemma we will show that, if there are two nonempty schedules $\sigma, \tau \in \mathcal{S}(\mathcal{P})$ beginning with processes on different teams such that $\sigma$ and $\tau$ change $O$ to the same value when applied in $C\alpha$, then either $\sigma = p_{n-1}$ or $\tau = p_{n-1}$.
To see why this is true, notice that $p_{n-1}$ cannot distinguish between $C\alpha\sigma c_{n-1}$ and $C\alpha\tau c_{n-1}$ and all of the objects have the same values in these two configurations, so $p_{n-1}$ decides the same value in its solo-terminating executions from these two configurations.
If neither $\sigma = p_{n-1}$ nor $\tau = p_{n-1}$, then both $\sigma$ and $\tau$ contain a process with a lower identifier than $p_{n-1}$.
Since $z \geq 1$, this implies that $\alpha\sigma c_{n-1}, \alpha\tau c_{n-1} \in \mathcal{E}^\star_z(C)$.
Hence, $\alpha\sigma c_{n-1}$ and $\alpha\tau c_{n-1}$ have different valencies with respect to $\mathcal{E}^\star_z(C)$.
This leads to a contradiction.
We formalize this in the following lemma.

\begin{lemma}\label{lem:samevalrestrict}
Suppose that $p_{n-1}$ is on team $\bar{v}$ in $C\alpha$.
If there is a process $p_i$ that is on team $v$ in $C\alpha$, a process $p_j$ that is on team $\bar{v}$ in $C\alpha$, an $R_i \in \mathcal{S}\bigl(\mathcal{P} - \{p_i\}\bigr)$, and an $R_j \in \mathcal{S}\bigl(\mathcal{P} - \{p_j\}\bigr)$ such that $\textit{value}(O, C\alpha p_iR_i) = \textit{value}(O, C\alpha p_jR_j)$, then $p_j = p_{n-1}$ and $R_j = \langle\rangle$.
\end{lemma}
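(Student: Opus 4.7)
My plan is to proceed by contradiction, assuming $p_j \neq p_{n-1}$ or $R_j \neq \langle\rangle$. Since $p_i$ is on team $v$ while $p_{n-1}$ is on team $\bar{v}$, we have $p_i \neq p_{n-1}$, so $p_i R_i$ contains a process of identifier less than $n-1$ (namely $p_i$); the contradiction hypothesis guarantees $p_j R_j$ does as well (in the case $p_j = p_{n-1}$, every element of $R_j$ lies in $\mathcal{P} - \{p_{n-1}\}$). The extra step by such a process increases the $zn$-budget for $p_{n-1}$ crashes by $zn \geq 1$, so appending $c_{n-1}$ keeps the execution in $\mathcal{E}^\star_z(C)$; hence $\alpha p_i R_i c_{n-1}, \alpha p_j R_j c_{n-1} \in \mathcal{E}^\star_z(C)$, and since $\alpha$ is critical, both extensions are univalent with respect to $\mathcal{E}^\star_z(C)$.

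The heart of the argument is pinning these valencies down: I will show $\alpha p_i R_i c_{n-1}$ is $v$-univalent and, symmetrically, $\alpha p_j R_j c_{n-1}$ is $\bar{v}$-univalent. Observation~\ref{obs:stilluniv} propagates $v$-univalence from $\alpha p_i$ through the crash-free schedule $R_i$ to $\alpha p_i R_i$, so $\mathcal{P} - \{p_{n-1}\}$ is in particular $v$-univalent there. Since $C\alpha p_i R_i c_{n-1} \widesim{\mathcal{P} - \{p_{n-1}\}} C\alpha p_i R_i$ and all objects hold the same values in the two configurations, Observation~\ref{obs:indistuniv} transfers this to $v$-univalence of $\mathcal{P} - \{p_{n-1}\}$ in $C\alpha p_i R_i c_{n-1}$ with respect to $\mathcal{E}^\star_z(C\alpha p_i R_i c_{n-1})$. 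A crash-free solo run of $p_0$ from $C\alpha p_i R_i c_{n-1}$ terminates with $p_0$ in an output state (by recoverable wait-freedom, since $p_0$ never crashes), and the just-obtained univalence forces its decision to be $v$. Combining this concrete crash-free extension (which lies in $\mathcal{E}^\star_z(C)$ by Observation~\ref{obs:crashfreesched}) with the univalence of $\alpha p_i R_i c_{n-1}$ upgrades the latter to $v$-univalence with respect to $\mathcal{E}^\star_z(C)$. The same argument applied with $p_j$ in place of $p_i$ yields $\bar{v}$-univalence of $\alpha p_j R_j c_{n-1}$.

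To close out, Lemma~\ref{lem:sameobject} tells us every process is poised to access $O$ in $C\alpha$, so each step in $p_i R_i$ (and $p_j R_j$) acts on $O$ and only $O$ changes value. Together with the hypothesis $\textit{value}(O, C\alpha p_i R_i) = \textit{value}(O, C\alpha p_j R_j)$, every object holds the same value in $C\alpha p_i R_i c_{n-1}$ and $C\alpha p_j R_j c_{n-1}$, while $p_{n-1}$ is in its initial state in both. Hence $p_{n-1}$'s solo-terminating executions from these two configurations are indistinguishable to $p_{n-1}$ and yield the same decision, but the $v$-univalence and $\bar{v}$-univalence derived above force that decision to be $v$ and $\bar{v}$, respectively, a contradiction. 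The main obstacle is the middle step: $v$-univalence of $\alpha p_i$ only restricts crash-free continuations, so the crash $c_{n-1}$ blocks any direct appeal, and bridging it requires the Observation~\ref{obs:indistuniv} transfer paired with a concrete crash-free $v$-deciding witness.
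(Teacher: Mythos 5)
Your proposal is correct and follows essentially the same route as the paper: argue by contradiction, note that both post-crash executions lie in $\mathcal{E}^\star_z(C)$ because $p_iR_i$ and $p_jR_j$ each contain a process with identifier below $n-1$, observe that $C\alpha p_iR_ic_{n-1}$ and $C\alpha p_jR_jc_{n-1}$ are indistinguishable to $p_{n-1}$ with equal object values, and derive a contradiction from the opposite univalences via $p_{n-1}$'s solo-terminating executions. The only divergence is the middle step: the paper obtains the $v$- and $\bar{v}$-univalence of $\alpha p_iR_ic_{n-1}$ and $\alpha p_jR_jc_{n-1}$ by a single application of Observation~\ref{obs:stilluniv} (its notion of univalence is used throughout as constraining continuations that contain crashes), whereas you take a harmless but unnecessary detour through criticality, Observation~\ref{obs:indistuniv}, and a crash-free $p_0$ witness to sidestep the crash $c_{n-1}$.
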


\begin{proof}
To obtain a contradiction, suppose that $p_j \neq p_{n-1}$ or $R_j \neq \langle\rangle$.
Notice that $C\alpha p_iR_i c_{n-1} \widesim{p_{n-1}} C\alpha p_jR_j c_{n-1}$ and all of the objects have the same values in $C\alpha p_iR_ic_{n-1}$ and $C\alpha p_jR_jc_{n-1}$.
Hence, $p_{n-1}$ decides the same value in its solo-terminating executions from $C\alpha p_iR_i c_{n-1}$ and $C\alpha p_jR_j c_{n-1}$.
Since $z \geq 1$ and the schedules $p_iR_i$ and $p_jR_j$ both contain a process with a smaller identifier than $p_{n-1}$, we have $\textit{exec}(C\alpha, p_iR_i c_{n-1}), \textit{exec}(C\alpha, p_jR_j c_{n-1}) \in \mathcal{E}^\star_z(C\alpha)$.
Since $\alpha \in \mathcal{E}^\star_z(C)$, Observation~\ref{obs:closedconcat} implies that $\alpha p_iR_ic_{n-1}, \alpha p_jR_jc_{n-1} \in \mathcal{E}^\star_z(C)$.
Since $\alpha p_i$ is $v$-univalent with respect to $\mathcal{E}^\star_z(C)$ and $\alpha p_j$ is $\bar{v}$-univalent with respect to $\mathcal{E}^\star_z(C)$, it must be true that $\alpha p_iR_ic_{n-1}$ is $v$-univalent with respect to $\mathcal{E}^\star_z(C)$ and $\alpha p_jR_jc_{n-1}$ is $\bar{v}$-univalent with respect to $\mathcal{E}^\star_z(C)$.
Hence, $p_{n-1}$ decides different values in its solo-terminating executions from $C\alpha p_iR_ic_{n-1}$ and $C\alpha p_jR_jc_{n-1}$.
This is a contradiction.
\end{proof}

For all $v \in \{0, 1\}$, define $T_v$ as the set of all processes on team $v$ in $C\alpha$.
Furthermore, define $U_v = \bigl\{\textit{value}(O, C\alpha\sigma) : \sigma = p_{i_v}\sigma',\: p_{i_v} \in T_v,\: \sigma' \in \mathcal{S}\bigl(\mathcal{P} - \{p_{i_v}\}\bigr)\bigr\}$.
The configuration $C\alpha$ is \emph{$v$-hiding} if $U_0 \cap U_1 = \emptyset$ and \emph{value}$(O, C\alpha) \in U_v$.
Intuitively, $C\alpha$ is $v$-hiding if it is not possible for the two teams to change $O$ to the same value, but team $v$ can be hidden in the sense that a schedule beginning with a step by a process on team $v$ can result in $O$ having the same value that it did in $C\alpha$.

The configuration $C\alpha$ is \emph{$n$-recording} if $U_0 \cap U_1 = \emptyset$ and, if \emph{value}$(O, C\alpha) \in U_v$ for some $v \in \{0, 1\}$, then $|T_{\bar{v}}| = 1$.
Notice that, if there exists an $n$-recording configuration in which all of the processes are poised to access an object $O$ of type $\mathcal{T}$, then $\mathcal{T}$ is $n$-recording.
For any $k \in \{1, \ldots, n-1\}$, let $\lambda_k$ denote the schedule $c_kc_{k+1}\ldots c_{n-1}$.
The next observation follows from the definitions of $v$-hiding and $n$-recording.

\begin{observation}\label{obs:cases}
$C\alpha$ is either $n$-recording, $v$-hiding for some $v \in \{0, 1\}$, or there exists a process $p_i$ on team $0$ in $C\alpha$, a process $p_j$ on team $1$ in $C\alpha$, an $R_i \in \mathcal{S}\bigl(\mathcal{P} - \{p_i\}\bigr)$, and an $R_j \in \mathcal{S}\big(\mathcal{P} - \{p_j\}\bigr)$ such that $\textit{value}(O, C\alpha p_iR_i) = \textit{value}(O, C\alpha p_jR_j)$.
\end{observation}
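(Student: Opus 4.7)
The plan is to derive the observation by a direct case analysis based on whether $U_0 \cap U_1 = \emptyset$, and, when it is, whether $\textit{value}(O, C\alpha)$ lies in $U_0 \cup U_1$. The statement is essentially an unpacking of the definitions of $n$-recording and $v$-hiding, so no substantial argument is required.

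First I would observe that the third disjunct in the statement is precisely a reformulation of the condition $U_0 \cap U_1 \neq \emptyset$. If some value $w$ lies in $U_0 \cap U_1$, then by definition of $U_0$ there exist a process $p_i \in T_0$ and a schedule $R_i \in \mathcal{S}\bigl(\mathcal{P} - \{p_i\}\bigr)$ with $\textit{value}(O, C\alpha p_i R_i) = w$, and symmetrically from $U_1$ we obtain a process $p_j \in T_1$ and a schedule $R_j \in \mathcal{S}\bigl(\mathcal{P} - \{p_j\}\bigr)$ with $\textit{value}(O, C\alpha p_j R_j) = w$; these are the witnesses required by the third case. The converse is immediate: such witnesses exhibit a common element of $U_0$ and $U_1$.

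So it suffices to show that, if $U_0 \cap U_1 = \emptyset$, then $C\alpha$ falls into one of the first two disjuncts. I would split on whether $\textit{value}(O, C\alpha) \in U_0 \cup U_1$. If $\textit{value}(O, C\alpha) \notin U_0 \cup U_1$, then the conditional in the definition of $n$-recording is vacuously satisfied, and $C\alpha$ is $n$-recording. Otherwise, since $U_0 \cap U_1 = \emptyset$, there is a unique $v \in \{0, 1\}$ with $\textit{value}(O, C\alpha) \in U_v$; if $|T_{\bar{v}}| = 1$ then $C\alpha$ is $n$-recording, and otherwise $C\alpha$ is $v$-hiding directly from the definition.

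The hardest part, such as it is, lies only in keeping track of which disjunct is verified in the vacuous subcase and in noticing that a given $C\alpha$ may simultaneously satisfy both the first and second disjuncts (which is compatible with the statement, since the ``or'' is inclusive). No results about valency, criticality, or recoverability are needed; the argument is purely a set-theoretic case split against the definitions introduced immediately before the observation.
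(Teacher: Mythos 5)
Your proof is correct and matches the paper's intent: the paper gives no explicit argument, stating only that the observation ``follows from the definitions of $v$-hiding and $n$-recording,'' and your case split on $U_0 \cap U_1$ and on whether $\textit{value}(O, C\alpha) \in U_0 \cup U_1$ is exactly the definitional unpacking that is meant. The identification of the third disjunct with $U_0 \cap U_1 \neq \emptyset$ and the handling of the vacuous subcase are both right.
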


For the following lemma, suppose that there exists a process $p_i$ on team $v$ in $C\alpha$ and an $R_i \in \mathcal{S}\bigl(\mathcal{P} - \{p_i\}\bigr)$ such that $\textit{value}(O, C\alpha p_iR_i) = \textit{value}(O, C\alpha X)$, where either $X = p_{n-1}$ and $p_{n-1}$ is on team $\bar{v}$ in $C\alpha$ or $X = \langle\rangle$.
Notice that if $X = \langle\rangle$, then $C\alpha$ is $v$-hiding.
Let $p_{\ell}$ be the process with the smallest identifier among $p_i$ and the processes in $R_i$.
In the following lemma, we show that if $\ell < n-1$ and the processes $p_{\ell+1}, \ldots, p_{n-1}$ crash in $C\alpha X$, and then these same processes take steps to produce an execution that is critical with respect to $\mathcal{E}^\star_z(C\alpha X\lambda_{\ell+1})$, then the resulting configuration is either $n$-recording or $v$-hiding.

To provide some intuition for this, notice that $\alpha p_iR_i\lambda_{\ell+1} \in \mathcal{E}^\star_z(C)$ because $p_\ell \in p_iR_i$ has a smaller identifier than any of the processes that crashed in $\lambda_{\ell+1}$.
Therefore, $\alpha p_iR_i\lambda_{\ell+1}$ is $v$-univalent with respect to $\mathcal{E}^\star_z(C)$.
For this example, suppose that $X = \langle\rangle$ and, to obtain a contradiction, suppose that there exists an execution $\alpha'$ containing only events by $p_{\ell+1}, \ldots, p_{n-1}$ such that $\alpha'$ is critical with respect to $\mathcal{E}^\star_z(C\alpha \lambda_{\ell+1})$ and $C\alpha \lambda_{\ell+1}\alpha'$ is $\bar{v}$-hiding.
Since $C\alpha p_iR_i\lambda_{\ell+1}$ and $C\alpha \lambda_{\ell+1}$ are indistinguishable to $p_{\ell+1}, \ldots, p_{n-1}$ and all of the objects have the same values in these two configurations, $\{p_{\ell+1}, \ldots, p_{n-1}\}$ is $v$-univalent in $C\alpha \lambda_{\ell+1}$ with respect to $\mathcal{E}^\star_z(C\alpha \lambda_{\ell+1})$ by Observation~\ref{obs:indistuniv}.
We can use this, combined with the fact that $\alpha'$ contains only events by $p_{\ell+1}, \ldots, p_{n-1}$, to show that $p_{\ell+1}, \ldots, p_{n-1}$ are on team $v$ in $C\alpha \lambda_{\ell+1}\alpha'$ with respect to $\mathcal{E}_z^\star(C\alpha\lambda_{\ell+1})$.
Define $C' = C\alpha \lambda_{\ell+1}\alpha'$.
Since $C'$ is $\bar{v}$-hiding, there exists a process $p_j$ on team $\bar{v}$ in $C'$ and an $R_j \in \mathcal{S}\bigl(\mathcal{P} - \{p_j\}\bigr)$ such that $\textit{value}(O, C'p_jR_j) = \textit{value}(O, C')$.
Since $p_{n-1}$ is on team $v$ in $C'$, we know that $p_j \neq p_{n-1}$.
Therefore, we can show that $\alpha'p_jR_jc_{n-1} \in \mathcal{E}^\star_z(C\alpha\lambda_{\ell+1})$.
Hence, $p_{n-1}$ decides $\bar{v}$ in its solo-terminating execution from $C' p_jR_j c_{n-1}$.
Since $p_{n-1}$ cannot distinguish between $C' p_jR_jc_{n-1}$ and $C' c_{n-1}$ and all of the objects have the same values in these configurations, $p_{n-1}$ decides $\bar{v}$ in its solo-terminating execution from $C'c_{n-1}$ as well.

Since the configurations $C\alpha\lambda_{\ell+1}$ and $C\alpha p_iR_i\lambda_{\ell+1}$ are indistinguishable to $p_{\ell+1}, \ldots, p_{n-1}$, all of the objects have the same values in these two configurations, and $\alpha'$ only contains events by $p_{\ell+1}, \ldots, p_{n-1}$, there exists an execution $\gamma$ from $C\alpha p_iR_i\lambda_{\ell+1}$ that is indistinguishable from $\alpha'$ to $p_{\ell+1}, \ldots, p_{n-1}$.
Hence, $p_{n-1}$ decides $\bar{v}$ in its solo-terminating execution from $C\alpha p_iR_i\lambda_{\ell+1}\gamma c_{n-1}$.
However, since $p_\ell$ takes a step in $p_iR_i$, $\ell < n-1$, $z \geq 1$, and $n \geq 2$, we can use the definition of $\mathcal{E}_z^\star(C)$ to show that $\alpha p_iR_i\lambda_{\ell+1}\gamma c_{n-1} \in \mathcal{E}^\star_z(C)$.
Since $\alpha p_iR_i\lambda_{\ell+1}$ is $v$-univalent with respect to $\mathcal{E}^\star_z(C)$, process $p_{n-1}$ must decide $v$ in its solo-terminating execution from $C\alpha p_iR_i\lambda_{\ell+1}\gamma c_{n-1}$.
This is a contradiction.
\ifarxiv
We formalize this argument and consider the other possible cases in the proof of the following lemma.
\else
We formalize this argument and consider the other possible cases in the full proof, which is included in the full version of this paper.
\fi


\begin{lemma}\label{lem:maintech}
For any $k \in \{\ell+1, \ldots, n-1\}$ and any execution $\alpha' \in \mathcal{E}^\star_z(C\alpha X\lambda_k)$ that contains no events by $p_0, \ldots, p_{k-1}$, if $\alpha'$ is critical with respect to $\mathcal{E}^\star_z(C'X\lambda_k)$, then $C\alpha X\lambda_k\alpha'$ is either $n$-recording or $v$-hiding.
\end{lemma}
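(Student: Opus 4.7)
The plan is to set $C' = C\alpha X\lambda_k\alpha'$ and first show that every process in $\{p_k, \ldots, p_{n-1}\}$ is on team $v$ in $C'$; then to use Observation~\ref{obs:cases} and Lemma~\ref{lem:samevalrestrict} to reduce the remaining cases (those leaving $C'$ neither $n$-recording nor $v$-hiding) to a tightly constrained form; and finally to derive a contradiction in each such case via a twin execution starting from $C\alpha p_iR_i\lambda_k$.

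For the first part, since Lemma~\ref{lem:sameobject} says every process is poised at $O$ in $C\alpha$, and each process in $p_iR_i$ (resp.\ in $X$) takes at most one step, both $p_iR_i$ and $X$ only modify $O$. Combined with $\textit{value}(O, C\alpha p_iR_i) = \textit{value}(O, C\alpha X)$ and the fact that $\lambda_k$ changes no object values while resetting $p_k, \ldots, p_{n-1}$ to their initial states, $C\alpha p_iR_i\lambda_k$ and $C\alpha X\lambda_k$ are indistinguishable to $\{p_k, \ldots, p_{n-1}\}$ and agree on every object value. A direct counting argument using $\ell < k$, $z \geq 1$, and $n \geq 2$ gives $\alpha p_iR_i\lambda_k \in \mathcal{E}^\star_z(C)$; since $\alpha p_i$ is $v$-univalent with respect to $\mathcal{E}^\star_z(C)$, Observation~\ref{obs:stilluniv} implies that $\{p_k, \ldots, p_{n-1}\}$ is $v$-univalent in $\alpha p_iR_i\lambda_k$ with respect to $\mathcal{E}^\star_z(C)$. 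Observation~\ref{obs:indistuniv} transfers this to $C\alpha X\lambda_k$, and Observation~\ref{obs:stilluniv} extends it along $\alpha'$, which consists solely of events by $\{p_k, \ldots, p_{n-1}\}$. Hence, for any $p_j \in \{p_k, \ldots, p_{n-1}\}$, criticality of $\alpha'$ makes $\alpha' p_j$ univalent, and appending $p_j$'s solo-terminating execution yields a decision of $v$ by recoverable wait-freedom, so $p_j$ is on team $v$ in $C'$ and $T_{\bar{v}} \subseteq \{p_0, \ldots, p_{k-1}\}$.

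Next, I apply Observation~\ref{obs:cases} to the critical $\alpha'$ (relative to $\mathcal{E}^\star_z(C\alpha X\lambda_k)$). The outcomes ``$C'$ is $n$-recording'' and ``$C'$ is $v$-hiding'' are immediate wins. In the third case, Lemma~\ref{lem:samevalrestrict} applied to $\alpha'$, together with $p_{n-1} \in T_v$, forces the team-$v$ process in the conflicting pair to be $p_{n-1}$ itself, paired with the empty suffix. In the $\bar{v}$-hiding case, the definition yields a team-$\bar{v}$ process paired with a schedule whose resulting $O$-value matches that of $C'$. In both remaining sub-cases I can therefore extract $p_m \in T_{\bar{v}} \subseteq \{p_0, \ldots, p_{k-1}\}$, a schedule $R_m \in \mathcal{S}(\mathcal{P} - \{p_m\})$, and $Y \in \{\langle\rangle, p_{n-1}\}$ with $\textit{value}(O, C'p_mR_m) = \textit{value}(O, C'Y)$.

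The main obstacle is the final contradiction. I would first verify $\alpha' p_m R_m c_{n-1} \in \mathcal{E}^\star_z(C\alpha X\lambda_k)$ using $m < n-1$ and that $p_m$ contributes a step, and then use Observation~\ref{obs:stilluniv} and $\bar{v}$-univalence of $\alpha' p_m$ to conclude that $p_{n-1}$'s solo-terminating execution from $C' p_mR_m c_{n-1}$ decides $\bar{v}$. I then construct a mirror $\gamma$ of $\alpha'$ from $C\alpha p_iR_i\lambda_k$ via the $\{p_k, \ldots, p_{n-1}\}$-indistinguishability established above and extend by a counterpart of $p_mR_m c_{n-1}$ followed by $p_{n-1}$'s solo execution. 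Indistinguishability to $p_{n-1}$ propagates the $\bar{v}$ decision into this extension, whereas the already-established $v$-univalence of $\alpha p_iR_i\lambda_k$ with respect to $\mathcal{E}^\star_z(C)$ forces the decision to be $v$, yielding the contradiction. The delicate points are maintaining the chain of indistinguishabilities when $p_m$ or some process in $R_m$ also appears in $p_iR_i$ (so their states differ between $C'$ and $C\alpha p_iR_i\lambda_k\gamma$) and when $Y = p_{n-1}$ (so the comparison branches through $C'p_{n-1}c_{n-1}$ rather than $C'c_{n-1}$); these are handled by crashing $p_{n-1}$ first and then branching according to $Y$, while checking at each step that the auxiliary execution still lies in the appropriate $\mathcal{E}^\star_z$.
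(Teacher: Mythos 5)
Your proposal is correct and follows essentially the same route as the paper's proof: establish that $p_{n-1}$ (indeed all of $p_k,\ldots,p_{n-1}$) is on team $v$ in $C'$ via the indistinguishability of $C\alpha p_iR_i\lambda_k$ and $C\alpha X\lambda_k$ together with Observations~\ref{obs:stilluniv} and~\ref{obs:indistuniv}, mirror $\alpha'$ by an execution $\gamma$ from $C\alpha p_iR_i\lambda_k$, reduce the remaining cases with Observation~\ref{obs:cases} and Lemma~\ref{lem:samevalrestrict}, and contradict the $v$-univalence of $\alpha p_iR_i\lambda_k\gamma(\,Y)c_{n-1}$ through $p_{n-1}$'s post-crash solo-terminating execution. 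Your unification of the $\bar v$-hiding and conflicting-pair sub-cases via $Y\in\{\langle\rangle,p_{n-1}\}$ is only a cosmetic repackaging of the paper's two-case analysis, and the counting step you sketch is exactly the paper's $zns+2\le zn(s+1)$ bound.
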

\ifarxiv
\begin{proof}
Notice that the lemma vacuously holds if $\ell = n-1$.
To obtain a contradiction, suppose that there exists some $\ell < n-1$ for which the lemma does not hold.
Then there exists $k \in \{\ell+1, \ldots, n-1\}$ and an execution $\alpha' \in \mathcal{E}^\star_z(C\alpha X\lambda_k)$ that contains no events by $p_0, \ldots, p_{k-1}$ such that $\alpha'$ is critical with respect to $\mathcal{E}^\star_z(C\alpha X\lambda_k)$ and $C\alpha X\lambda_k\alpha'$ is neither $n$-recording nor $v$-hiding.

Since $p_i$ is on team $v$ in $C\alpha$, we know that $\alpha p_iR_i$ is $v$-univalent with respect to $\mathcal{E}^\star_z(C)$.
Furthermore, since $p_\ell$ takes a step in $p_iR_i$, $z \geq 1$, $n \geq 2$, and $\ell < k$, we know that $\textit{exec}(C\alpha, p_iR_i\lambda_k) \in \mathcal{E}^\star_z(C\alpha)$ by definition of $\mathcal{E}_z^\star(C\alpha)$.
Since $\alpha \in \mathcal{E}^\star_z(C)$, Observation~\ref{obs:closedconcat} implies that $\alpha p_iR_i\lambda_k \in \mathcal{E}^\star_z(C)$.
Hence, $\alpha p_iR_i\lambda_k$ is $v$-univalent with respect to $\mathcal{E}^\star_z(C)$ by Observation~\ref{obs:stilluniv}.
In particular, $\{p_k, \ldots, p_{n-1}\}$ is $v$-univalent in $\alpha p_iR_i\lambda_k$ with respect to $\mathcal{E}^\star_z(C)$.
Notice that $C\alpha p_iR_i\lambda_k$ and $C\alpha X\lambda_k$ are indistinguishable to $p_k, \ldots, p_{n-1}$ and all of the objects have the same values in these two configurations.
By Observation~\ref{obs:indistuniv}, $\{p_k, \ldots, p_{n-1}\}$ is $v$-univalent in $C\alpha X\lambda_k$ with respect to $\mathcal{E}^\star_z(C\alpha X\lambda_k)$.
Since $\alpha' \in \mathcal{E}^\star_z(C\alpha X\lambda_k)$ and $\alpha'$ only contains steps by $p_k, \ldots, p_{n-1}$, Observation~\ref{obs:stilluniv} implies that $\{p_k, \ldots, p_{n-1}\}$ is $v$-univalent in $\alpha'$ with respect to $\mathcal{E}^\star_z(C\alpha X\lambda_k)$.
In particular, $p_{n-1}$ is on team $v$ in $C\alpha X\lambda_k\alpha'$.\todo{might need some more detail}

Since $C\alpha X\lambda_k$ and $C\alpha p_iR_i\lambda_k$ are indistinguishable to $p_k, \ldots, p_{n-1}$, all of the objects have the same values in these two configurations, and $\alpha'$ contains only events by $p_k, \ldots, p_{n-1}$, there exists an execution $\gamma$ from $C\alpha p_iR_i\lambda_k$ containing only events by $p_k, \ldots, p_{n-1}$ that is indistinguishable from $\alpha'$ to $p_k, \ldots, p_{n-1}$.
Furthermore, since the schedules of $\alpha'$ and $\gamma$ consisting of the events by $p_{k}, \ldots, p_{n-1}$ are the same,  $\gamma \in \mathcal{E}^\star_z(C\alpha p_iR_i\lambda_k)$.

Since $\alpha p_iR_i\lambda_k \in \mathcal{E}^\star_z(C)$ and $\gamma \in \mathcal{E}^\star_z(C\alpha p_iR_i\lambda_k)$, Observation~\ref{obs:closedconcat} implies that $\alpha p_iR_i\lambda_k\gamma \in \mathcal{E}^\star_z(C)$.
By the definition of $\mathcal{E}^\star_z(C)$, for every process $p_r$ and every prefix $\gamma'$ of $\alpha p_iR_i\lambda_k\gamma$, the number of crashes by $p_r$ is no greater than $zn$ times the number of steps collectively taken by $p_0, \ldots, p_{r-1}$ in $\gamma'$.
We now show that $\alpha p_iR_i\lambda_k \gamma c_{n-1} \in \mathcal{E}^\star_z(C)$.
To do so, it suffices to prove that the number of crashes by $p_{n-1}$ in $\alpha p_iR_i\lambda_k\gamma c_{n-1}$ is no greater than $zn$ times the number of steps collectively taken by $p_0, \ldots, p_{n-2}$ in $\alpha p_iR_i\lambda_k\gamma c_{n-1}$.
To see why this is true, first notice that $\alpha \in \mathcal{E}^\star_z(C)$ and $\gamma \in \mathcal{E}^\star_z(C\alpha p_iR_i\lambda_k)$.
Hence, if $s$ is the number of steps taken by $p_0, \ldots, p_{n-2}$ in $\alpha$ and $\gamma$, then the number of crashes by $p_{n-1}$ in $\alpha$ and $\gamma$ is at most $zns$.
Furthermore, since $p_\ell \in p_iR_i$, the processes $p_0, \ldots, p_{n-2}$ take at least $s+1$ steps and $p_{n-1}$ crashes at most $zns + 2$ times in $\alpha p_iR_i\lambda_k\gamma c_{n-1}$.
Since $z \geq 1$ and $n \geq 2$, we have $zns + 2 \leq zns + zn = zn(s+1)$.
Therefore, $\alpha p_iR_i\lambda_k\gamma c_{n-1} \in \mathcal{E}^\star_z(C)$.

Since $\alpha p_iR_i$ is $v$-univalent with respect to $\mathcal{E}^\star_z(C)$, $\alpha p_iR_i\lambda_k\gamma c_{n-1}$ is also $v$-univalent with respect to $\mathcal{E}^\star_z(C)$ by Observation~\ref{obs:stilluniv}.
Hence, $p_{n-1}$ decides $v$ in its solo-terminating execution from $C\alpha p_iR_i\lambda_k\gamma c_{n-1}$.

\medskip

Define $C' = C\alpha X\lambda_k\alpha'$.
Since $C'$ is neither $n$-recording nor $v$-hiding by assumption, Observation~\ref{obs:cases} implies that either $C'$ is $\bar{v}$-hiding or there exists a process $p_j$ on team $v$ in $C'$, a process $p_m$ on team $\bar{v}$ in $C'$, an $R_j \in \mathcal{S}\bigl(\mathcal{P} - \{p_j\}\bigr)$, and an $R_m \in \mathcal{S}\bigl(\mathcal{P} - \{p_m\}\bigr)$ such that $\textit{value}(O, C'p_jR_j) = \textit{value}(O, C'p_mR_m)$.
We proceed by considering these two cases separately.

In the first case, $C'$ is $\bar{v}$-hiding.
Then there exists a process $p_m$ on team $\bar{v}$ in $C'$  and a schedule $R_m \in \mathcal{S}\bigl(\mathcal{P} - \{p_m\}\bigr)$ such that \emph{value}$(O, C' p_mR_m) =$ \emph{value}$(O, C')$.
Since $p_{n-1}$ is on team $v$ in $C'$, we know that $p_m \neq p_{n-1}$, that is, $m < n-1$.
Hence, $p_mR_mc_{n-1} \in \mathcal{E}^\star_z(C')$.
Since $\alpha' \in \mathcal{E}^\star_z(C\alpha X\lambda_k)$, Observation~\ref{obs:closedconcat} implies that $\alpha' p_mR_mc_{n-1} \in \mathcal{E}^\star_z(C\alpha X\lambda_k)$.
Since $p_m$ is on team $\bar{v}$ in $C'$, $\alpha' p_mR_mc_{n-1}$ is $\bar{v}$-univalent with respect to $\mathcal{E}^\star_z(C\alpha X\lambda_k)$.
Then $p_{n-1}$ decides $\bar{v}$ in its solo-terminating execution from $C'p_mR_mc_{n-1}$.
Since all of the objects have the same values in $C'c_{n-1}$ and $C'p_mR_mc_{n-1}$ and these two configurations are indistinguishable to $p_{n-1}$, we know that $p_{n-1}$ decides $\bar{v}$ in its solo-terminating execution from $C'c_{n-1}$ as well.
Furthermore, $C'c_{n-1}$ and $C\alpha p_iR_i\lambda_k\gamma c_{n-1}$ are indistinguishable to $p_{n-1}$ and all of the objects have the same values in these configurations.
Therefore, $p_{n-1}$ decides $\bar{v}$ in its solo-terminating execution from $C\alpha p_iR_i\lambda_k\gamma c_{n-1}$ as well, which is a contradiction.

\medskip

Now suppose that there exists a process $p_j$ on team $v$ in $C'$, a process $p_m$ on team $\bar{v}$ in $C'$, an $R_j \in \mathcal{S}\bigl(\mathcal{P} - \{p_j\}\bigr)$, and an $R_m \in \mathcal{S}\bigl(\mathcal{P} - \{p_m\}\bigr)$ such that $\textit{value}(O, C' p_jR_j) = \textit{value}(O, C' p_mR_m)$.
Recall that $p_{n-1}$ is on team $v$ in $C'$.
By Lemma~\ref{lem:samevalrestrict}, $p_j = p_{n-1}$ and $R_j = \langle\rangle$.
Hence, $\textit{value}(O, C' p_{n-1}) = \textit{value}(O, C' p_mR_m)$.
Thus, all of the objects have the same values in $C' p_{n-1}c_{n-1}$ and $C' p_mR_mc_{n-1}$ and these two configurations are indistinguishable to $p_{n-1}$.
Since $p_m \neq p_{n-1}$, we have $m < n-1$.
Hence, $\textit{exec}(C', p_mR_mc_{n-1}) \in \mathcal{E}^\star_z(C')$.
Since $\alpha' \in \mathcal{E}^\star_z(C\alpha X\lambda_k)$, Observation~\ref{obs:closedconcat} implies that $\alpha'p_mR_mc_{n-1} \in \mathcal{E}^\star_z(C\alpha X\lambda_k)$.
By Observation~\ref{obs:stilluniv}, $\alpha'p_mR_mc_{n-1}$ is $\bar{v}$-univalent with respect to $\mathcal{E}^\star_z(C\alpha X\lambda_k)$.
Hence, $p_{n-1}$ decides $\bar{v}$ in its solo-terminating executions from $C'X\lambda_k\alpha' p_mR_mc_{n-1}$ and $C\alpha X\lambda_k\alpha' p_{n-1}c_{n-1}$.
This implies that $p_{n-1}$ decides $\bar{v}$ in its solo-terminating execution from $C\alpha p_iR_i\lambda_k\gamma p_{n-1}c_{n-1}$ as well.

Since $\alpha p_iR_i\lambda_k \gamma c_{n-1} \in \mathcal{E}^\star_z(C)$, we know that $\alpha p_iR_i\lambda_k \gamma p_{n-1} c_{n-1} \in \mathcal{E}^\star_z(C)$ as well.
Since $\alpha p_iR_i$ is $v$-univalent with respect to $\mathcal{E}^\star_z(C)$, the execution $\alpha p_iR_i\lambda_k\gamma p_{n-1}c_{n-1}$ is also $v$-univalent with respect to $\mathcal{E}^\star_z(C)$ by Observation~\ref{obs:stilluniv}.
Therefore, $p_{n-1}$ decides $v$ in its solo-terminating execution from $C\alpha p_iR_i\lambda_k\gamma p_{n-1}c_{n-1}$.
This is a contradiction.
Hence, $C' = C\alpha X\lambda_k\alpha'$ is $v$-hiding.
\end{proof}
\fi

We will now prove our main theorem.
To do so, we will consider an arbitrary recoverable wait-free consensus algorithm using objects with types $\mathcal{T}_0, \mathcal{T}_1, \mathcal{T}_2, \ldots$, and we will apply Lemma~\ref{lem:maintech} to inductively construct a chain of $v$-hiding configurations $D_1', \ldots, D_{\ell-1}'$ followed by an $n$-recording configuration $D_\ell'$.
Every configuration $D_{i}'$ is reachable from some configuration $D_i$ via an execution $\alpha_i$ that is critical with respect to $\mathcal{E}^\star_1(D_{i})$.
The main idea of the argument is to force longer and longer suffixes of $p_0, \ldots, p_{n-1}$ to be on team $v$ in the $v$-hiding configurations, until eventually we either reach an $n$-recording configuration in which some process $p_i$ with $i > 0$ is on team $\bar{v}$, or we reach a configuration that is $v$-hiding and in which only $p_0$ is on team $\bar{v}$.
Such a configuration is also $n$-recording.

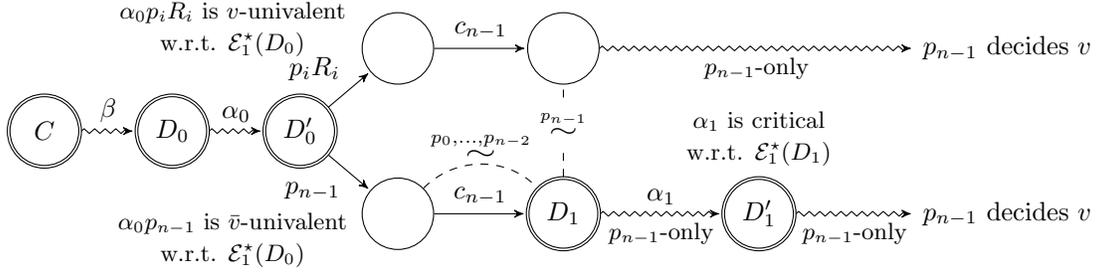
\begin{figure*}[t]
\centering
\begin{tikzpicture}[shorten >=1pt,node distance=2.9cm,on grid,auto] 
	\tikzset{every state/.style={minimum size=0.95cm}}

	\node[state,accepting]		(C)		{$C$};
	\node[state,accepting]		(D0) [right=1.7cm of C]	{$D_0$};
	\node[state,accepting]		(D0') [right=1.7cm of D0] 	{$D_0'$};

	\node[state]		(pn-1) [below right=1.1cm and 1.3cm of D0']		{};
	\node[state,accepting]		(D1)	[right=2.2cm of pn-1] {$D_1$};
	\node[state,accepting]		(D1') [right=2.6cm of D1] {$D_1'$};
	\node		(D1'pn-1)		[right=3.3cm of D1'] {$p_{n-1}$ decides $v$};
	
	\node[state]		(piRi) [above right=1.1cm and 1.3cm of D0']		{};
	\node[state]		(piRicn-1) [right=2.2cm of piRi] {};
	\node		(piRipn-1)		[above=2.2cm of D1'pn-1] {$p_{n-1}$ decides $v$};

	\draw [-,dashed] (pn-1) to [bend left=45] (D1);
   	\node[above left=0.9cm and 1.1cm of D1] {$\widesim{p_0, \ldots, p_{n-2}}$};
   	
   	\draw [-,dashed] (D1) to (piRicn-1);
   	\node[fill=white,above=1.2cm of D1,align=center] {\small $\widesim{p_{n-1}}$};
   	
   	\node[above left=0.25cm and 2.2cm of piRi,align=center] {\small $\alpha_0p_iR_i$ is $v$-univalent \\ \small w.r.t. $\mathcal{E}^\star_1(D_0)$};
   	
   	\node[below left=0.35cm and 2.2cm of pn-1,align=center] {\small $\alpha_0p_{n-1}$ is $\bar{v}$-univalent \\ \small w.r.t. $\mathcal{E}^\star_1(D_0)$};
   	
   	\node[above=1cm of D1',align=center] {\small $\alpha_1$ is critical \\ \small w.r.t. $\mathcal{E}^\star_1(D_1)$};
	
    \path[->,
	line join=round, decoration={
	    zigzag,
	    segment length=4,
	    amplitude=.9,post=lineto,
	    post length=2pt}]	    
	    
	(C)	 edge[decorate] node[above, align=center] {$\beta$} (D0)
	(D0)	edge[decorate] node[above,align=center] {$\alpha_0$} (D0')
	(D0')	edge node[below left,align=center] {$p_{n-1}$} (pn-1)
	(pn-1)	edge node[above,align=center] {$c_{n-1}$} (D1)
	(D1) 	edge[decorate] node[above,align=center] {$\alpha_1$} node[below,align=center] {\small $p_{n-1}$-only} (D1')
	(D1')	 edge[decorate] node[below,align=center] {\small $p_{n-1}$-only} (D1'pn-1)
	(D0')	edge node[above left,align=center] {$p_iR_i$} (piRi)
	(piRi)	edge node[above,align=center] {$c_{n-1}$} (piRicn-1)
	(piRicn-1)	edge[decorate] node[below,align=center] {\small $p_{n-1}$-only} (piRipn-1);
\end{tikzpicture}
\bigskip
\caption{The construction of $D_1$ and $D_1'$ when $D_0'$ is neither $n$-recording nor $v$-hiding for any $v \in \{0, 1\}$. Any configuration $C$ with a double outline is bivalent with respect to $\mathcal{E}^\star_1(C)$.}\label{fig:d1}
\end{figure*}

\begin{theorem}\label{thm:main}
	If there exists a recoverable wait-free consensus algorithm for $n > 1$ processes using objects with deterministic types $\mathcal{T}_0, \mathcal{T}_1, \mathcal{T}_2, \ldots$, then at least one of the types $\mathcal{T}_i$ is $n$-recording.
\end{theorem}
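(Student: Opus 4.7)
The plan is to use the valency framework built in Lemmas \ref{lem:eventualcrit}--\ref{lem:maintech} to inductively construct a sequence of critical configurations $D_0', D_1', \ldots, D_\ell'$ such that the last one, $D_\ell'$, is $n$-recording. Since Lemma \ref{lem:sameobject} guarantees that in each critical configuration every process is poised to access a single object $O$, and the type of $O$ is one of the $\mathcal{T}_i$, exhibiting such a $D_\ell'$ immediately yields the theorem.

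First I pick an initial configuration $D_0$ in which some process has input $0$ and another has input $1$; by Observation \ref{obs:bivinit}, $D_0$ is bivalent with respect to $\mathcal{E}^\star_1(D_0)$. Lemma \ref{lem:eventualcrit}(a) with $z = 1$ then yields a critical execution $\alpha_0 \in \mathcal{E}^\star_1(D_0)$, and I set $D_0' = D_0\alpha_0$. By Observation \ref{obs:cases}, either $D_0'$ is $n$-recording (done), or $D_0'$ is $v$-hiding for some $v$, or there is a cross-team value collision at $D_0'$. In the collision case, Lemma \ref{lem:samevalrestrict} forces the colliding $\bar{v}$-side to consist of a single step by $p_{n-1}$; in both of the non-$n$-recording cases, the hypotheses of Lemma \ref{lem:maintech} are met with $X = \langle\rangle$ or $X = p_{n-1}$ respectively.

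Inductively, I maintain the invariant that $D_k'$ is $v$-hiding and that a growing suffix $\{p_{n-k}, \ldots, p_{n-1}\} \subseteq T_v$ of processes has been forced onto team $v$. To pass from $D_k'$ to $D_{k+1}'$, I extract from the $v$-hiding witness (or from the cross-team collision) a pair $p_i, R_i$ with smallest index $\ell \leq n-k-1$ appearing among them; I then crash the suffix via $\lambda_{\ell+1}$, apply Lemma \ref{lem:eventualcrit}(a) inside the restricted process set $\{p_{\ell+1}, \ldots, p_{n-1}\}$ to obtain a critical execution $\alpha_{k+1}$, and invoke Lemma \ref{lem:maintech} to conclude that $D_{k+1}' = D_k' X \lambda_{\ell+1} \alpha_{k+1}$ is either $n$-recording or $v$-hiding, now with the strictly larger suffix $\{p_{\ell+1}, \ldots, p_{n-1}\} \subseteq T_v$.

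Because the forced suffix strictly grows at each step, the induction terminates in at most $n-1$ stages. Either we reach an $n$-recording configuration along the way, or we arrive at a $v$-hiding $D_\ell'$ with $T_{\bar{v}} \subseteq \{p_0\}$; Lemma \ref{lem:diffteams} then forces $T_{\bar{v}} = \{p_0\}$, so $|T_{\bar{v}}| = 1$ and $D_\ell'$ is $n$-recording by definition. The main obstacle I anticipate is faithfully preserving the inductive invariant through each step: in particular, verifying that the new critical execution $\alpha_{k+1}$ can be chosen using only processes in the remaining suffix while still lying in $\mathcal{E}^\star_1$, and selecting the appropriate variant ($X = \langle\rangle$ or $X = p_{n-1}$) of Lemma \ref{lem:maintech} at each stage depending on whether $D_k'$ is strictly $v$-hiding or arises from a cross-team collision.
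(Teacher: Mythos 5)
Your overall architecture matches the paper's: a chain of critical, $v$-hiding configurations with a growing suffix of processes forced onto team $v$, terminating either at an $n$-recording configuration or at a $v$-hiding one in which $|T_{\bar v}|=1$. However, there is a genuine gap at exactly the point you yourself flag as the main anticipated obstacle: nothing in your plan ensures that the critical execution $\alpha_{k+1}$ contains no events by $p_0,\ldots,p_{\ell}$, and Lemma~\ref{lem:maintech} cannot be invoked without that hypothesis. Lemma~\ref{lem:eventualcrit}~\ref{lem:eventualcrit:a} produces a critical execution with respect to $\mathcal{E}^\star_1$, but criticality there is defined with respect to extensions by \emph{all} processes, and the lemma gives no control over which processes take steps inside the critical execution it returns. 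You also cannot simply apply it ``inside the restricted process set'': restricting to $\{p_{\ell+1},\ldots,p_{n-1}\}$ would yield criticality only with respect to extensions by those processes, a strictly weaker property than the one Lemma~\ref{lem:maintech} requires.

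The paper closes this gap with machinery your proposal never invokes, namely part~\ref{lem:eventualcrit:b} of Lemma~\ref{lem:eventualcrit} and the larger execution class $\mathcal{E}_2$. It does not take $D_0$ to be an initial configuration; instead it sets $D_0 = C\beta$, where $\beta\in\mathcal{E}_2(C)$ is chosen so that $C\beta$ is bivalent with respect to $\mathcal{E}^\star_1(C\beta)$ but every nonempty extension of $\beta$ inside $\mathcal{E}_2(C)$ leads to a univalent configuration. Then, if some $\alpha_i$ contained a step by a low-identifier process, that extra step would supply enough crash budget (with constant $2n$ in place of $n$) to place the whole concatenation $\beta\delta_{i-1}\lambda_{n-i}\alpha_i$ in $\mathcal{E}_2(C)$, while Lemma~\ref{lem:critisbiv} would make its endpoint bivalent, contradicting the maximality of $\beta$. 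This two-parameter argument ($z=1$ versus $z'=2$) is the reason Lemma~\ref{lem:eventualcrit} has two parts and $\mathcal{E}_z$ is defined alongside $\mathcal{E}^\star_z$; without it, your induction does not go through. The remainder of your plan (Observation~\ref{obs:cases}, Lemma~\ref{lem:samevalrestrict} to pin the cross-team collision to $X=p_{n-1}$, and termination via $|T_{\bar v}|=1$) agrees with the paper's proof.
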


\begin{proof}
We construct a chain of configurations $D_0, D_0', \ldots, D_\ell, D_\ell'$ where $\ell \leq n-1$ and a value $v \in \{0, 1\}$ such that every process is poised to access the same object $O$ in $D_0', \ldots, D_\ell'$, the configuration $D_\ell'$ is $n$-recording, and, for all $i \in \{0, \ldots, \ell\}$,
\begin{enumerate}[label=(\alph*)]
\item $D_i$ is bivalent with respect to $\mathcal{E}^\star_1(D_i)$,\label{prop:biv}
\item $D_i'$ is reachable from $D_i$ via a schedule $\alpha_i \in \mathcal{E}^\star_1(D_i)$,\label{prop:alpha}
\item $\alpha_i$ is critical with respect to $\mathcal{E}^\star_1(D_i)$,\label{prop:critical}
\item if $\ell > i > 0$, then $D_i'$ is $v$-hiding and not $n$-recording,\label{prop:vhiding}
\item $p_{n-i}, \ldots, p_{n-1}$ are on team $v$ in $D_i'$, and\label{prop:teamv}
\item $D_i'$ is reachable from $D_0$ via an execution $\delta_i$ containing only events by $p_{n-i}, \ldots, p_{n-1}$ such that, for each process $p_k \in \{p_{n-i}, \ldots, p_{n-1}\}$, the number of times $p_k$ crashes in $\delta_i$ is no more than $ns + i$, where $s$ is the number of steps by $p_{n-i}, \ldots, p_{k-1}$ in $\delta_i$.\label{prop:numcrash}
\end{enumerate}

\noindent Since $D_\ell'$ is $n$-recording and every process is poised to access $O$ in $D_\ell'$, this proves that the type of $O$ is $n$-recording.

\smallskip

We now show how to construct this sequence of configurations.
Let $C$ be some initial configuration in which $p_0$ has input $0$ and $p_1$ has input $1$.
By Observation~\ref{obs:bivinit}, $C$ is bivalent with respect to $\mathcal{E}^\star_1(C)$.
By Lemma~\ref{lem:eventualcrit}~\ref{lem:eventualcrit:b}, there exists a finite execution $\beta \in \mathcal{E}_2(C)$ such that $C\beta$ is bivalent with respect to $\mathcal{E}^\star_1(C\beta)$ and, for every nonempty finite schedule $\gamma$ such that $\beta\gamma \in \mathcal{E}_2(C)$, the configuration $C\beta\gamma$ is univalent with respect to $\mathcal{E}^\star_1(C\beta\gamma)$.

Define $D_0 = C\beta$.
Then $D_0$ is bivalent with respect to $\mathcal{E}_1^\star(D_0)$, giving us property~\ref{prop:biv} for $i = 0$.
Lemma~\ref{lem:eventualcrit}~\ref{lem:eventualcrit:a} implies that there exists a finite execution $\alpha_0 \in \mathcal{E}^\star_1(D_0)$ such that $\alpha_0$ is critical with respect to $\mathcal{E}^\star_1(D_0)$.
Define $D_0' = D_0\alpha_0$.
This gives us properties \ref{prop:alpha} and \ref{prop:critical} for $i = 0$.
By Lemma~\ref{lem:sameobject}, every process is poised to access the same object $O$ in $D_0\alpha_0$.
Properties \ref{prop:vhiding} and \ref{prop:teamv} hold vacuously for $i = 0$.

Since $\alpha_0 \in \mathcal{E}^\star_1(D_0)$ and $\mathcal{E}^\star_1(D_0) \subset \mathcal{E}_2(D_0)$, Observation~\ref{obs:closedconcat} implies that $\beta\alpha_0 \in \mathcal{E}_2(C)$.
Since $D_0' = C\beta\alpha_0$ is bivalent with respect to $\mathcal{E}^\star_1(D_0')$ by Lemma~\ref{lem:critisbiv}, $\alpha_0$ must be empty by Lemma~\ref{lem:eventualcrit}~\ref{lem:eventualcrit:b}.
Define $\delta_0 = \alpha_0$.
This gives us property~\ref{prop:numcrash} for $i = 0$.
If $D_0'$ is $n$-recording, then define $D_\ell = D_0 = C\beta$ and $D_\ell' = D_0' = C\beta\alpha$.
This completes the construction of $D_0$ and $D_0'$.

%


Before defining $D_i$ and $D_i'$, we first show how to define $D_1$ and $D_1'$ in the special case in which $D_0'$ is neither $n$-recording nor $v$-hiding for any $v \in \{0, 1\}$.




\subsubsection*{Defining $D_1$ and $D_1'$ when $D_0'$ is neither $n$-recording nor $v$-hiding}

By Observation~\ref{obs:cases}, there exists a process $p_i$ on team $v$ in $D_0'$, a process $p_j$ on team $\bar{v}$ in $D_0'$, an $R_i \in \mathcal{S}\bigl(\mathcal{P} - \{p_i\}\bigr)$, and an $R_j \in \mathcal{S}\bigl(\mathcal{P} - \{p_j\}\bigr)$ such that \emph{value}$(O, D_0' p_iR_i) =$ \emph{value}$(O, D_0' p_jR_j)$.
Without loss of generality, suppose that $p_{n-1}$ is on team $\bar{v}$.
By Lemma~\ref{lem:samevalrestrict}, $p_j = p_{n-1}$ and $R_j = \langle\rangle$.
Since $p_i$ is on team $v$ in $D_0'$ and $p_{n-1}$ is on team $\bar{v}$ in $D_0'$, we know that $p_i$ and $p_{n-1}$ are different processes, that is, $i < n-1$.
Since $p_i$ is on team $v$ in $D_0'$, it must be the case that $\alpha_0 p_iR_i$ is $v$-univalent with respect to $\mathcal{E}^\star_1(D_0)$.
We know that $p_iR_i c_{n-1} \in \mathcal{E}^\star_1(D_0')$ because $i < n-1$.
By Observation~\ref{obs:closedconcat}, $\alpha p_iR_i c_{n-1} \in \mathcal{E}^\star_1(D_0)$.
Therefore, Observation~\ref{obs:stilluniv} implies that $\alpha p_iR_ic_{n-1}$ is $v$-univalent with respect to $\mathcal{E}^\star_1(D_0)$.
Then $p_{n-1}$ decides $v$ in its solo-terminating execution from $D_0' p_iR_ic_{n-1}$.
Since all of the base objects have the same values in $D_0' p_iR_ic_{n-1}$ and $D_0' p_{n-1}c_{n-1}$ and these two configurations are indistinguishable to $p_{n-1}$, process $p_{n-1}$ decides $v$ in its solo-terminating execution from $D_0' p_{n-1}c_{n-1}$ as well.
Similarly, since all of the objects have the same values in $D_0' p_{n-1}$ and $D_0' p_{n-1}c_{n-1}$, these two configurations are indistinguishable to $p_0$, and $\alpha p_{n-1}$ is $\bar{v}$-univalent with respect to $\mathcal{E}^\star_1(D_0)$, we know that $p_0$ decides $\bar{v}$ in its solo-terminating execution from $D_0' p_{n-1}c_{n-1}$.
Therefore, $D_0' p_{n-1}c_{n-1}$ is bivalent with respect to $\mathcal{E}^\star_1(D_0' p_{n-1}c_{n-1})$.
Define $D_1 = D_0' p_{n-1}c_{n-1}$.
This gives us property~\ref{prop:biv} for $i = 1$.
By Lemma~\ref{lem:eventualcrit}~\ref{lem:eventualcrit:a}, there exists a finite execution $\alpha_1 \in \mathcal{E}^\star_1(D_1)$ such that $\alpha_1$ is critical with respect to $\mathcal{E}^\star_1(D_1)$.
Define $D_1' = D_1\alpha_1$.
This gives us properties \ref{prop:alpha} and \ref{prop:critical} for $i = 1$.

We claim that $\alpha_1$ contains no steps by $p_0, \ldots, p_{n-2}$.
To obtain a contradiction, suppose that $\alpha_1$ contains at least one step by some process $p_0, \ldots, p_{n-2}$.
Recall that $\alpha_0 \in \mathcal{E}^\star_1(D_0)$ and $\alpha_1 \in \mathcal{E}^\star_1(D_1)$.
Hence, if $s$ is the number of steps taken by $p_0, \ldots, p_{n-2}$ in $\alpha_0$ and $\alpha_1$ together, then $p_{n-1}$ crashes at most $ns$ times in these executions.
The processes $p_0, \ldots, p_{n-2}$ also take $s$ steps in $\alpha_0 p_{n-1}c_{n-1}\alpha_1$ and $p_{n-1}$ crashes at most $ns + 1$ times in this execution.
Since some process in $p_0, \ldots, p_{n-2}$ takes a step in $\alpha_1$, we have $s \geq 1$.
Furthermore, $n \geq 2$, and so $ns + 1 < 2ns$.
Then process $p_{n-1}$ crashes at most $2n$ times the number of steps by $p_0, \ldots, p_{n-2}$ in $\alpha_0 p_{n-1}c_{n-1}\alpha_1$.
Hence, $\alpha_0 p_{n-1}c_{n-1}\alpha_1 \in \mathcal{E}_2(D_0)$.
Recall that $D_0 = C\beta$, where $\beta \in \mathcal{E}_2(C)$.
Observation~\ref{obs:closedconcat}  implies that $\beta\alpha_0 p_{n-1}c_{n-1}\alpha_1 \in \mathcal{E}_2(C)$.
Furthermore, since $\alpha_1$ is critical with respect to $\mathcal{E}^\star_1(D_1)$, the configuration $D_1' = C\beta\alpha_0 p_{n-1}c_{n-1}\alpha_1$ is bivalent with respect to $\mathcal{E}^\star_1(D_1')$ by Lemma~\ref{lem:critisbiv}.
This contradicts Lemma~\ref{lem:eventualcrit}~\ref{lem:eventualcrit:b}.
Therefore, $\alpha_1$ only contains steps by $p_{n-1}$.

Since $\alpha_1$ only contains events by $p_{n-1}$ and $p_{n-1}$ decides $v$ in its solo-terminating execution from $D_1$, we know that $p_{n-1}$ decides $v$ in its solo-terminating execution from $D_1' = D_1\alpha_1$ as well.
Hence, $p_{n-1}$ is on team $v$ in $D_1'$.
This gives us property~\ref{prop:teamv} for $i = 1$.

Define $\delta_1 = \alpha_0p_{n-1}c_{n-1}\alpha_1$.
Since $\alpha_1 \in \mathcal{E}^\star_1(D_1)$ and $\alpha_1$ contains only steps by $p_{n-1}$, no crashes occur in $\alpha_1$.
Since $\alpha_0$ is empty, process $p_{n-1}$ crashes exactly once in $\delta_1 = \alpha_0p_{n-1}c_{n-1}\alpha_1$.
This gives us property~\ref{prop:numcrash} for $i = 1$.

By Lemma~\ref{lem:sameobject}, every process is poised to access the same object in $D_1'$.
Since $p_0, \ldots, p_{n-1}$ are poised to access $O$ in $D_0'$ and $p_0, \ldots, p_{n-2}$ take no steps during $p_{n-1}c_{n-1}\alpha'$, every process is poised to access $O$ in $D_1'$ as well.
By Lemma~\ref{lem:maintech}, $D_1'$ is either $n$-recording or $v$-hiding.
If $D_1'$ is $n$-recording, then define $D_\ell = D_1$ and $D_\ell' = D_1'$.
Otherwise, $D_1'$ is $v$-hiding and not $n$-recording, which gives us property~\ref{prop:vhiding} for $i = 1$.
This case is depicted in Figure~\ref{fig:d1}.

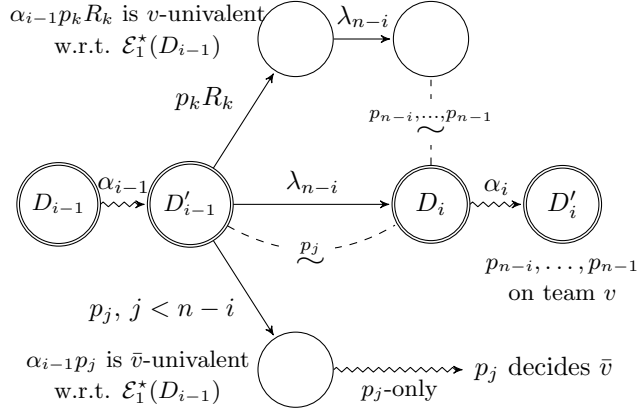
\begin{figure}[t]
\centering
\begin{tikzpicture}[shorten >=1pt,node distance=2.9cm,on grid,auto] 
	\tikzset{every state/.style={minimum size=1cm}}

	\node[state,accepting]		(Di-1) 	{\small $D_{i-1}$};
	\node[state,accepting]		(Di-1') [right=1.75cm of Di-1]		{\small $D_{i-1}'$};

	\node[state]		(pkRk)	[above right=2.2cm and 1.4cm of Di-1'] {};
	\node[state]		(pkRklambda) [right=1.8cm of pkRk] {};
	
	\node[state,accepting]		(Di) [right=3.2cm of Di-1']		{$D_i$};
	\node[state,accepting]		(Di') [right=1.75cm of Di] {$D_i'$};
	
	\node[state]					(pji-1) [below right=2.2cm and 1.4cm of Di-1']	{};
	\node				(pji-1solo) [right=3.3cm of pji-1]	{$p_{j}$ decides $\bar{v}$};

	
%
%
   	\node[above left=0.1cm and 2.1cm of pkRk,align=center] {\small $\alpha_{i-1}p_kR_k$ is $v$-univalent \\ \small w.r.t. $\mathcal{E}^\star_1(D_{i-1})$};
   	
   	\node[below left=0.1cm and 2.1cm of pji-1,align=center] {\small $\alpha_{i-1}p_{j}$ is $\bar{v}$-univalent \\ \small w.r.t. $\mathcal{E}^\star_1(D_{i-1})$};
   	
   	\node[below=1cm of Di',align=center] {\small $p_{n-i}, \ldots, p_{n-1}$ \\ \small on team $v$};
   	
   	\draw [-,dashed] (Di-1') to [bend right=30] (Di);
   	\node[below right=0.65cm and 1.6cm of Di-1',fill=white] {$\widesim{p_{j}}$};
   	
   	\draw [-,dashed] (Di) to (pkRklambda);
   	\node[above=1.1cm of Di,fill=white] {$\widesim{p_{n-i}, \ldots, p_{n-1}}$};

%
	
    \path[->,
	line join=round, decoration={
	    zigzag,
	    segment length=4,
	    amplitude=.9,post=lineto,
	    post length=2pt}]	    
	    
	(Di-1)	edge[decorate] node[above,align=center] {$\alpha_{i-1}$} (Di-1')
	(Di-1') 	edge node[above left] {$p_kR_k$} (pkRk)
	(pkRk)	 edge node[above,align=center] {$\lambda_{n-i}$} (pkRklambda)
	(Di-1')	edge node[above,align=center] {$\lambda_{n-i}$} (Di)
	(Di)	edge[decorate] node[above,align=center] {$\alpha_i$} (Di')
	(Di-1') edge node[below left] {$p_{j}$, $j < n-i$} (pji-1)
	(pji-1) edge[decorate] node[below] {\small $p_{j}$-only}  (pji-1solo);
\end{tikzpicture}
\bigskip
\caption{The construction of $D_i$ and $D_i'$. Configurations $C$ with a double outline are bivalent w.r.t. $\mathcal{E}^\star_1(C)$.}\label{fig:di}
\end{figure}


\subsubsection*{Defining $D_i$ and $D_i'$}

Suppose that we have constructed the sequence $D_0, D_0', \ldots, D_{i-1}, D_{i-1}'$ for some $i \geq 1$.
(Note that in the previous subsection, we covered the case in which $D_0'$ is not $v$-hiding, for any $v \in \{0, 1\}$; therefore, we only consider the case in which $D_{i-1}'$ is $v$-hiding.)
Define $D_i = D_{i-1}'\lambda_{n-i}$.


By property~\ref{prop:teamv} for $i - 1$, all of the processes $p_{n-i+1}, \ldots, p_{n-1}$ are on team $v$ in $D_{i-1}'$.
Furthermore, if $p_{n-i}$ is the only process on team $\bar{v}$ in $D_{i-1}'$, then $D_{i-1}'$ is $n$-recording, which contradicts property~\ref{prop:vhiding} for $i-1$.
Hence, there is some process $p_j$ on team $\bar{v}$ in $D_{i-1}'$ such that $j < n-i$.

Since $D_{i-1}'$ is $v$-hiding by property~\ref{prop:vhiding} for $i-1$, there exists a process $p_k$ on team $v$ and an $R_k \in \mathcal{S}\bigl(\mathcal{P} - \{p_k\}\bigr)$ such that $\textit{value}(O, D_{i-1}'p_kR_k) = \textit{value}(O, D_{i-1}')$.
We claim that $p_{j} \in R_k$.
To obtain a contradiction suppose that $p_{j} \not\in R_k$.
Then $D_{i-1}'p_{j}$ and $D_{i-1}'p_kR_kp_{j}$ are indistinguishable to $p_{j}$.
Furthermore, since all of the objects have the same values in $D_{i-1}'$ and $D_{i-1}'p_kR_k$, all of the objects have the same values in $D_{i-1}'p_{j}$ and $D_{i-1}'p_kR_kp_{j}$ as well.
Therefore, $p_{j}$ decides the same value in its solo-terminating executions from $D_{i-1}'p_{j}$ and $D_{i-1}'p_kR_kp_{j}$.
However, since $p_k$ is on team $v$ in $D_{i-1}'$ and $p_{j}$ is on team $\bar{v}$ in $D_{i-1}'$, the execution $\alpha_{i-1} p_kR_kp_{j}$ is $v$-univalent with respect to $\mathcal{E}^\star_1(D_{i-1})$ and $\alpha_{i-1} p_{j}$ is $\bar{v}$-univalent with respect to $\mathcal{E}^\star_1(D_{i-1})$.
Therefore, $p_{j}$ decides $v$ in its solo-terminating execution from $D_{i-1}'p_kR_kp_{j}$ and $\bar{v}$ in its solo-terminating execution from $D_{i-1}'p_{j}$.
This is a contradiction.
Hence, $p_{j} \in R_k$.

Since $p_{j} \in R_k$ and $j < n-i$, we know that $\textit{exec}(D_{i-1}', p_kR_k\lambda_{n-i}) \in \mathcal{E}^\star_1(D_{i-1}')$.
By Observation~\ref{obs:closedconcat} and the fact that $\alpha_{i-1} \in \mathcal{E}^\star_1(D_{i-1})$, it follows that $\alpha_{i-1}p_kR_k\lambda_{n-i} \in \mathcal{E}^\star_1(D_{i-1})$.
Since $\alpha_{i-1}p_kR_k$ is $v$-univalent with respect to $\mathcal{E}^\star_1(D_{i-1})$, Observation~\ref{obs:stilluniv} implies that $\alpha_{i-1}p_kR_k\lambda_{n-i}$ is $v$-univalent with respect to $\mathcal{E}^\star_1(D_{i-1})$ as well.
Since $D_{i-1}\alpha_{i-1}p_kR_k\lambda_{n-i}$ and $D_i = D_{i-1}'\lambda_{n-i}$ are indistinguishable to $p_{n-i}, \ldots, p_{n-1}$ and all of the objects have the same values in these two configurations, Observation~\ref{obs:indistuniv} implies that $\{p_{n-i}, \ldots, p_{n-1}\}$ is $v$-univalent in $D_i$ with respect to $\mathcal{E}^\star_1(D_i)$ as well.


Since $p_{j}$ is on team $\bar{v}$ in $D_{i-1}'$, it decides $\bar{v}$ in its solo-terminating execution from $D_{i-1}'$.
Notice that $D_{i-1}'$ and $D_i = D_{i-1}'\lambda_{n-i}$ are indistinguishable to $p_{j}$ and all of the objects have the same values in these two configurations.
Therefore, $p_{j}$ decides $\bar{v}$ in its solo-terminating execution from $D_i$ as well.
Hence, $D_i$ is bivalent with respect to $\mathcal{E}^\star_1(D_i)$.
This gives us property~\ref{prop:biv} for $i$.
By Lemma~\ref{lem:eventualcrit}~\ref{lem:eventualcrit:a}, there exists a finite schedule $\alpha_i$ such that $\alpha_i \in \mathcal{E}^\star_1(D_i)$ and $\alpha_i$ is critical with respect to $\mathcal{E}^\star_1(D_i)$.
Define $D_i' = D_i\alpha_i$.
This gives us properties \ref{prop:alpha} and \ref{prop:critical} for $i$.
The construction of $D_i$ and $D_i'$ is depicted in Figure~\ref{fig:di}.

We claim that the execution $\alpha_i$ contains no steps by $p_0, \ldots, p_{n-i-1}$.
To obtain a contradiction, suppose the claim is false.
By property~\ref{prop:numcrash} for $i-1$, $D_{i-1}'$ is reachable from $D_0$ via an execution $\delta_{i-1}$ containing only events by $p_{n-i+1}, \ldots, p_{n-1}$ such that, for every process $p_k \in \{p_{n-i+1}, \ldots, p_{n-1}\}$, the number of times $p_k$ crashes in $\delta_{i-1}$ is no more than $ns_k + i - 1$, where $s_k$ is the number of steps by $p_{n-i}, \ldots, p_{k-1}$ in $\delta_{i-1}$.
Since $\alpha_{i} \in \mathcal{E}^\star_1(D_i)$, the number of times any process $p_k$ crashes in $\alpha_i$ is no more than $ns_k'$, where $s_k'$ is the number of steps by $p_{0}, \ldots, p_{k-1}$ in $\alpha_i$.
Since each process $p_{n-i}, \ldots, p_{n-1}$ crashes exactly once in $\lambda_{n-i}$, every process $p_k \in \{p_{n-i}, \ldots, p_{n-1}\}$ crashes at most $n(s_k + s_k') + i$ times in $\delta_{i-1}\lambda_{n-i}\alpha_i$.
Since $i \leq n-1$ and $s_k' \geq 1$ for all $p_k \in \{p_{n-i}, \ldots, p_{n-1}\}$ by assumption, we have $n(s_k + s_k') + i \leq 2n(s_k + s_k')$.
In other words, the number of crashes by each process $p_k \in \{p_{n-i}, \ldots, p_{n-1}\}$ in $\delta_{i-1}\lambda_{n-i}\alpha_i$ is no more than $2n$ times the number of steps taken by $p_0, \ldots, p_{k-1}$ in $\delta_{i-1}\lambda_{n-i}\alpha_i$.
Since $\delta_{i-1}\lambda_{n-i}$ contains no events by $p_0, \ldots, p_{n-i-1}$ and $\alpha_i \in \mathcal{E}^\star_1(D_i)$, the number of crashes by each process $p_k \in \{p_0, \ldots, p_{n-i-1}\}$ in $\delta_{i-1}\lambda_{n-i}\alpha_i$ is no more than $n$ times the number of steps taken by $p_0, \ldots, p_{k-1}$ in $\delta_{i-1}\lambda_{n-i}\alpha_i$.
Hence, $\delta_{i-1}\lambda_{n-i}\alpha_i \in \mathcal{E}_2(D_0)$.
Observation~\ref{obs:closedconcat} implies that $\beta\delta_{i-1}\lambda_{n-i}\alpha_i \in \mathcal{E}_2(C)$.
However, Lemma~\ref{lem:critisbiv} implies that $D_i' = C\beta\delta_{i-1}\lambda_{n-i}\alpha_i$ is bivalent with respect to $\mathcal{E}^\star_1(D_i')$.
This contradicts Lemma~\ref{lem:eventualcrit}~\ref{lem:eventualcrit:b}.
Hence, $\alpha_i$ contains no steps by $p_0, \ldots, p_{n-i-1}$.
By definition of $\mathcal{E}^\star_1(D_i)$, $\alpha_i$ also contains no crashes by $p_0, \ldots, p_{n-i-1}$.
Therefore, $\alpha_i$ contains no events by $p_0, \ldots, p_{n-i-1}$.

Define $\delta_i = \delta_{i-1}\lambda_{n-i}\alpha_i$.
Since $\delta_i$ has no events by $p_0, \ldots, p_{n-i-1}$, each process $p_k \in \{p_{n-i}, \ldots, p_{n-1}\}$ crashes at most $n(s_k + s_k') + i$ times in $\delta_i$, where $s_k$ is the number of steps taken by $p_{n-i}, \ldots, p_{n-1}$ in $\delta_{i-1}$ and $s_k'$ is the number of steps taken by $p_{n-i}, \ldots, p_{n-1}$ in $\alpha_i$.
This gives us property~\ref{prop:numcrash} for $i$.

By Lemma~\ref{lem:maintech}, $D_i'$ is either $n$-recording or $v$-hiding.
If $D_i'$ is $n$-recording, then define $D_\ell = D_i$ and $D_\ell' = D_i'$.
Otherwise, $D_i'$ is $v$-hiding and not $n$-recording, which gives us property~\ref{prop:vhiding} for $i$.
Since $\{p_{n-i}, \ldots, p_{n-1}\}$ is $v$-univalent in $D_i = D_{i-1}'\lambda_{n-i}$ with respect to $\mathcal{E}^\star_1(D_i)$ and $\alpha_i$ only contains events by $p_{n-i}, \ldots, p_{n-1}$, Observation~\ref{obs:stilluniv} implies that $\{p_{n-i}, \ldots, p_{n-1}\}$ is $v$-univalent in $\alpha_i$ with respect to $\mathcal{E}^\star_1(D_i)$.
Therefore, $p_{n-i}, \ldots, p_{n-1}$ are all on team $v$ in $D_i'$.
This gives us property~\ref{prop:teamv} for $i$.
This completes the construction.

%
%
\end{proof}

Delporte-Gallet, Fatourou, Fauconnier, and Ruppert \cite{dffr-22} showed that, if a deterministic, readable type is $n$-recording, then objects of that type can be used along with registers to solve recoverable wait-free consensus among $n$ processes.
Their result combined with Theorem~\ref{thm:main} gives us the following.


\begin{theorem}
The recoverable consensus hierarchy is robust for deterministic, readable types.
\end{theorem}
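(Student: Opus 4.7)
The plan is to derive this theorem directly by composing Theorem~\ref{thm:main} with Theorem~8 of \cite{dffr-22}. Recall that robustness of the recoverable consensus hierarchy for a class of types $\mathcal{C}$ means: whenever some collection of types $\mathcal{T}_0, \mathcal{T}_1, \mathcal{T}_2, \ldots$ from $\mathcal{C}$ can be used along with registers to solve $n$-process recoverable wait-free consensus, at least one $\mathcal{T}_i$ already has recoverable consensus number at least $n$ on its own. So the theorem will follow once I produce, for any such collection drawn from the deterministic, readable types, a single $\mathcal{T}_i$ that (together with registers) solves $n$-process recoverable wait-free consensus.

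First, I would fix an arbitrary $n$-process recoverable wait-free consensus algorithm that uses objects of deterministic, readable types $\mathcal{T}_0, \mathcal{T}_1, \mathcal{T}_2, \ldots$ and registers. Since registers are themselves deterministic types, the whole collection satisfies the hypothesis of Theorem~\ref{thm:main}: recoverable wait-free consensus among $n$ processes is solvable using objects of deterministic types. Theorem~\ref{thm:main} therefore yields some index $i$ such that $\mathcal{T}_i$ is $n$-recording. Note that $\mathcal{T}_i$ cannot be the register type, since registers are not $n$-recording for any $n \geq 2$ (as registers have consensus number $1$, and any $n$-recording type has recoverable consensus number, hence consensus number, at least $n$); so $\mathcal{T}_i$ is one of the given deterministic, readable types.

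Second, I would invoke Theorem~8 of \cite{dffr-22}, which guarantees that any deterministic, readable, $n$-recording type admits an $n$-process recoverable wait-free consensus algorithm using only objects of that type and registers. Applied to $\mathcal{T}_i$, this gives such an algorithm, so the recoverable consensus number of $\mathcal{T}_i$ is at least $n$. This is exactly the conclusion needed for robustness.

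I do not expect any real obstacle here: the argument is a one-line composition of two results already in hand. The only nuance is the mild observation in the previous paragraph that the witnessing type $\mathcal{T}_i$ produced by Theorem~\ref{thm:main} must in fact be one of the nontrivial readable types in the collection rather than a register, so that Theorem~8 of \cite{dffr-22} actually applies to it; this is immediate because being $n$-recording for $n \geq 2$ already implies consensus number at least $2$. The entire proof should therefore fit in a short paragraph.
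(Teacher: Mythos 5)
Your proposal is correct and matches the paper's own argument, which likewise obtains the theorem by composing Theorem~\ref{thm:main} with Theorem~8 of \cite{dffr-22}. The extra remark that the $n$-recording witness cannot be the register type is a harmless (and valid) refinement that the paper leaves implicit.
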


%

\section{Non-readable Types}\label{sec:nonreadable}

For all $n > n' \geq 1$, we describe a non-readable type $\mathcal{T}_{n, n'}$ that has consensus number $n$ and recoverable consensus number $n'$.
The type $\mathcal{T}_{n, n'}$ has three operations, $op_0$, $op_1$, and $op_R$, and $2n$ values $s$, $s_\bot$, and $s_{x, i}$, where $x \in \{0, 1\}$ and $i \in \{1, \ldots, n-1\}$.
Applying $op_0$ to an object of $\mathcal{T}_{n, n'}$ with value $s$ returns $0$ and changes its value to $s_{0, 1}$.
Similarly, applying $op_1$ to an object of $\mathcal{T}_{n, n'}$ with value $s$ returns $1$ and changes its value to $s_{1, 1}$.
Applying either $op_0$ or $op_1$ to an object with value $s_{x, i}$, for some $x \in \{0, 1\}$ and $i < n - 1$, returns $x$ and changes its value to $s_{x, i+1}$.
Applying either $op_0$ or $op_1$ to an object with value $s_{x, n-1}$ returns $x$ and changes the value to $s_\bot$.
When the object has value $s_\bot$, applying any operation returns $\bot$ and does not change the value of the object.

The operation $op_R$ is similar to \emph{Read} unless the object has value $s_{x, i}$ for some $x \in \{0, 1\}$ and $i > n'$.
More specifically, when an object of $\mathcal{T}_{n, n'}$ has value $s$, applying $op_R$ returns $s$ and does not change the value.
Applying $op_R$ when the object has value $s_{x, i}$ where $i \leq n'$ returns $s_{x, i}$ and does not change the value.
If $i > n'$, then applying $op_R$ to an object with value $s_{x, i}$ returns $\bot$ and changes its value to $s_\bot$.
\ifarxiv
The state machine diagram of type $\mathcal{T}_{5, 2}$ is given in Figure~\ref{fig:nonreadabletype}.

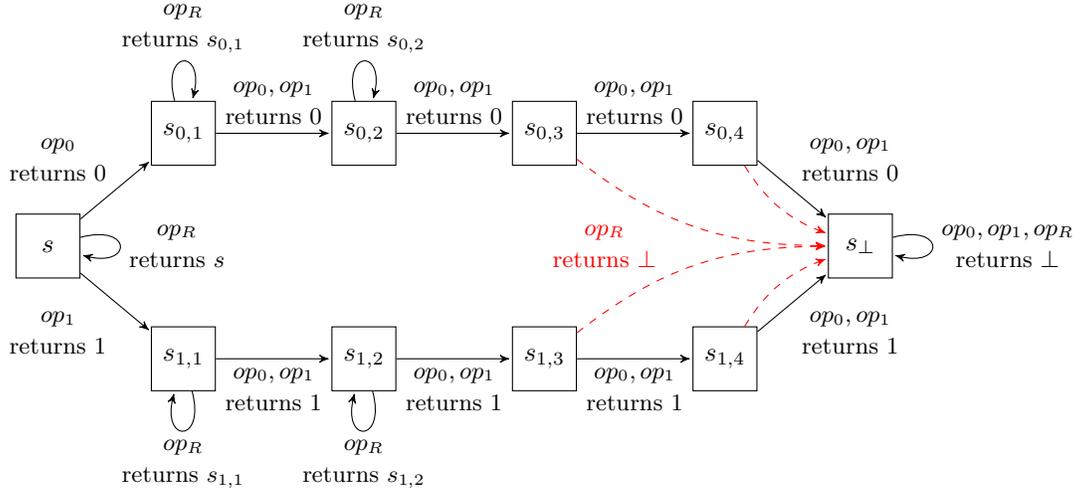
\begin{figure}[h]
\centering
\begin{tikzpicture}[shorten >=1pt,node distance=2.9cm,on grid,auto] 
	\tikzset{state/.style={rectangle,draw,minimum width=0.85cm,minimum height=0.85cm}}

	\node[state]		(s)		{$s$};
	\node[state]		(sA1) [above right=1.5cm and 1.8cm of s]	{$s_{0,1}$};
	\node[state]		(sA2) [right=2.4cm of sA1] 	{$s_{0,2}$};
	\node[state]		(sA3) [right=2.4cm of sA2]		{$s_{0,3}$};
	\node[state]		(sA4) [right=2.4cm of sA3]		{$s_{0,4}$};
	
	\node	(rdlabel)	[below right=1.5cm and 0.8cm of sA3,align=center,red]	{\small $op_R$ \\ \small returns $\bot$};
	
	\node[state]		(sB1) [below right=1.5cm and 1.8cm of s]	{$s_{1,1}$};
	\node[state]		(sB2) [right=2.4cm of sB1]	{$s_{1,2}$};
	\node[state]		(sB3) [right=2.4cm of sB2]	{$s_{1,3}$};
	\node[state]		(sB4) [right=2.4cm of sB3]	{$s_{1,4}$};
	
	\node[state]		(sbot) [below right=1.5cm and 1.8cm of sA4]		{$s_{\bot}$};

    \path[->,
	line join=round, decoration={
	    zigzag,
	    segment length=4,
	    amplitude=.9,post=lineto,
	    post length=2pt}]	    
	    
	(s)	 edge node[above left,align=center] {\small $op_0$ \\ \small returns $0$} (sA1)
	(sA1)	edge node[above,align=center] {\small $op_0, op_1$ \\ \small returns $0$} (sA2)
	(sA2)	edge node[above,align=center] {\small $op_0, op_1$ \\ \small returns $0$} (sA3)
	(sA3)	edge node[above,align=center] {\small $op_0, op_1$ \\ \small returns $0$} (sA4)
	(sA4)	edge node[above right,align=center] {\small $op_0, op_1$ \\ \small returns $0$} (sbot)
	
	(s)	 edge node[below left,align=center] {\small $op_1$ \\ \small returns $1$} (sB1)
	(sB1)	edge node[below,align=center] {\small $op_0, op_1$ \\ \small returns $1$} (sB2)
	(sB2)	edge node[below,align=center] {\small $op_0, op_1$ \\ \small returns $1$} (sB3)
	(sB3)	edge node[below,align=center] {\small $op_0, op_1$ \\ \small returns $1$} (sB4)
	(sB4)	edge node[below right,align=center] {\small $op_0, op_1$ \\ \small returns $1$} (sbot)
	
	(sA3)	edge[bend right=19,red,dashed]	(sbot)
	(sA4)	edge[bend right=19,red,dashed]	(sbot)
	(sB3)	edge[bend left=19,red,dashed]	(sbot)
	(sB4)	edge[bend left=19,red,dashed]	(sbot)
	
	(sA1)	edge[loop above] node[above,align=center] {\small $op_R$ \\ \small returns $s_{0,1}$} (sA1)
	(sA2)	edge[loop above] node[above,align=center] {\small $op_R$ \\ \small returns $s_{0,2}$} (sA2)
	(sB1)	edge[loop below] node[below,align=center] {\small $op_R$ \\ \small returns $s_{1,1}$} (sB1)
	(sB2)	edge[loop below] node[below,align=center] {\small $op_R$ \\ \small returns $s_{1,2}$} (sB2)
	(sbot)	edge[loop right] node[right,align=center] {\small $op_0, op_1, op_R$ \\ \small returns $\bot$} (sbot)
	(s)	edge[loop right] node[right,align=center] {\small $op_R$ \\ \small returns $s$} (s);
\end{tikzpicture}
\medskip
\caption{The state machine diagram of the type $\mathcal{T}_{5, 2}$.}\label{fig:nonreadabletype}
\end{figure}
\else
The state machine diagram of type $\mathcal{T}_{5, 2}$ is given in the full version of this paper.
\fi

\smallskip

We now argue that wait-free binary consensus can be solved among $n$ processes using a single object $O$ of type $\mathcal{T}_{n, n'}$.
The object $O$ begins with value $s$.
A process with input $x \in \{0, 1\}$ applies $op_x$ to $O$ and decides the value returned by the operation.
Notice that the first operation applied to $O$ in the algorithm determines the value returned by the next $n-1$ operations applied to $O$.
Therefore, the algorithm satisfies agreement.
The object $O$ can only have value $s_{v, i}$, for some $v \in \{0, 1\}$ and some $i \in \{1, \ldots, n-1\}$, if the first operation applied to $O$ was $op_v$.
Since a process only applies $op_v$ when it has input $v$, the algorithm satisfies validity.
Finally, each process applies only one operation, so the algorithm satisfies wait-freedom.
A standard valency argument shows that consensus number of $\mathcal{T}_{n, n'}$ is at most $n$.
\ifarxiv
We perform this valency argument in the proof of the following lemma.
\else
This argument is included in the full version of this paper.
\fi

\begin{lemma}\label{lem:consn}
	The consensus number of type $\mathcal{T}_{n, n'}$ is $n$.
\end{lemma}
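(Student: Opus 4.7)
The plan is to establish the matching upper bound: no wait-free consensus algorithm for $n+1$ processes exists using objects of type $\mathcal{T}_{n, n'}$ and registers. Assuming for contradiction that such an algorithm exists, I would apply a standard valency argument, reaching by wait-freedom a critical bivalent configuration $C$ in which every one-step extension is univalent. A standard commutativity argument shows that all $n + 1$ processes must be poised to access the same object $O$ at $C$: otherwise a $0$-valent next step and a $1$-valent next step on different objects would commute, producing identical configurations with opposite valencies. If $O$ is a register, the usual read/write case analysis yields a contradiction via any third process as witness, which exists since $n + 1 \geq 3$.

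If $O$ is of type $\mathcal{T}_{n, n'}$, I case-analyze on the value $v$ of $O$ at $C$ together with the operations that some chosen $p$ on team $0$ and $q$ on team $1$ are poised to apply. Most cases reduce either to $Cpq = Cqp$ (for example, when $v = s_{x, i}$ the operations $op_0$ and $op_1$ have identical effect, and whenever $p$ and $q$ apply the same operation the two orderings coincide) or to an indistinguishability to a third-process witness (when one of $p, q$ applies an $op_R$ that behaves as a pure read, the configurations $Cpq$ and $Cq$ agree on $O$'s value and on $q$'s local state, differing only in the $op_R$-applying process's state). The only delicate case is $v = s$ with $p$ applying $op_0$ and $q$ applying $op_1$, which I split further according to the operations of the remaining $n - 1$ processes.

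In the first sub-case, some other process $p_r$ is poised to apply $op_R$. Then $p_r$'s $op_R$ on $s$ is a pure read (using $n' \geq 1$), so $Cp_r$ differs from $C$ only in $p_r$'s local state and is itself univalent by criticality. Whichever valency $Cp_r$ has, comparing $Cp_r q$ with $Cq p_r$ (if $Cp_r$ is $0$-valent) or $Cp_r p$ with $Cp p_r$ (if $Cp_r$ is $1$-valent) yields two configurations with opposite valencies that agree on $O$'s value and on the state of every process other than $p_r$; any third process then serves as a witness whose identical solo execution produces a contradiction. In the second sub-case, all $n - 1$ other processes are poised to apply $op_0$ or $op_1$; then scheduling $p$, $q$, and $n - 2$ of them produces $n$ write operations that drive $O$ from $s$ to $s_\bot$, and the symmetric schedule starting with $q$ and $p$ (followed by the same $n - 2$ other processes) reaches the same $s_\bot$ state. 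The single remaining process acts as witness whose solo execution is identical in both configurations, yielding the contradiction. The main obstacle will be the careful bookkeeping required in the delicate case; the condition $n > n' \geq 1$ plays an essential role, since $n' \geq 1$ enables the commutativity of $op_R$ in the first sub-case, and $n > n'$ ensures that the object behaves as a genuine $n$-use resource in the second.
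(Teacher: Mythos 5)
Your proposal is correct and follows essentially the same route as the paper's proof: a standard valency argument to a critical configuration in which all $n+1$ processes are poised on one object, elimination of the $op_R$ cases via commutativity and third-process witnesses, and the decisive final case in which $n$ applications of $op_0$/$op_1$ drive the object to $s_\bot$, so that the single remaining process cannot distinguish a schedule begun by team $0$ from one begun by team $1$. The only sub-case your enumeration does not explicitly list is a \emph{destructive} $op_R$ (object value $s_{x,i}$ with $i > n'$ at the critical configuration), but the third-process-witness technique you already invoke handles it by comparing $Cpq$ with $Cqp$ (both leave $O$ with value $s_\bot$ and differ only in the states of $p$ and $q$), and the paper's own treatment of that corner is no more detailed.
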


\ifarxiv
\begin{proof}
We already showed that the consensus number of $\mathcal{T}_{n, n'}$ is at least $n$.
To complete the proof, we now show that the consensus number of $\mathcal{T}_{n, n'}$ is at most $n$.
For this proof, we use standard definitions of valency.
A configuration $C$ is \emph{bivalent} if, for all $v \in \{0, 1\}$, there exists an execution $\beta_v$ from $C$ such that some process has decided $v$ in $C\beta_v$.
A configuration is \emph{$v$-univalent} if, for every execution $\beta$ from $C$ such that some process has decided a value in $C\beta$, that value is $v$.
To obtain a contradiction, suppose that there is a wait-free algorithm that solves consensus among $n+1$ processes $p_0, \ldots, p_{n}$ using instances of $\mathcal{T}_{n, n'}$ along with registers.
Let $C$ be a bivalent initial configuration of the algorithm in which process $p_0$ has input $0$ and $p_1$ has input $1$.
Using a standard argument, we can show that there is an execution from $C$ that leads to a \emph{critical} configuration $C'$, that is, the configuration $C'$ is bivalent and $C'p_i$ is univalent, for all $i$.
Since $n+1 > 1$, a standard argument shows that every process is poised to access the same object $O$ of type $\mathcal{T}_{n, n'}$ in $C'$.
Define $T_v$ to be the set of processes $p_i$ such that $C'p_i$ is $v$-univalent, and define $o_i$ as the operation that $p_i$ is poised to apply to $O$ in $C'$ for all $i \in \{0, \ldots, n\}$.

First suppose that $o_i = op_R$ for some $i$, where $p_i \in T_v$.
Then either $\textit{value}(O, C'p_i) = \textit{value}(O, C')$ or $\textit{value}(O, C'p_i) = s_\bot$.
In either case, $\textit{value}(O, C'p_ip_j) = \textit{value}(O, C'p_j)$, where $p_j \in T_{\bar{v}}$.
Since $n > n' \geq 1$, we have $n+1 \geq 3$.
Hence, there is a process $p_k$ different from $p_i$ and $p_j$.
Then $C'p_ip_j$ and $C'p_j$ are indistinguishable to $p_k$ and all of the objects have the same values in these two configurations.
Hence, $p_k$ decides the same value in its solo-terminating executions from $C'p_ip_j$ and $C'p_j$.
However, $C'p_ip_j$ is $v$-univalent and $C'p_j$ is $\bar{v}$-univalent.
This is a contradiction
Hence, all of the operations $o_0, \ldots, o_n$ are either $op_0$ or $op_1$.

Let $p_i \in T_0$, $p_j \in T_1$, and let $p_k$ be some process other than $p_i$ and $p_j$.
Since each operation $o_0, \ldots o_n$ is either $op_0$ or $op_1$, for any schedule $\sigma$ containing all $n$ processes besides $p_k$, $\textit{value}(O, C'\sigma) = s_\bot$.
Let $\sigma_i$ and $\sigma_j$ be two schedules containing all $n$ processes besides $p_k$ such that $\sigma_i$ begins with $p_i$ and $\sigma_j$ begins with $p_j$.
Then $C'\sigma_i$ and $C'\sigma_j$ are indistinguishable to $p_k$ and all of the objects have the same values in these two configurations.
Hence, $p_k$ decides the same value in its solo-terminating executions from $C'\sigma_i$ and $C'\sigma_j$.
However, $C'\sigma_i$ is $v$-univalent and $C'\sigma_j$ is $\bar{v}$-univalent.
This is a contradiction.
Hence, the consensus number of $\mathcal{T}_{n, n'}$ is at most $n$.
This completes the proof that the consensus number of $\mathcal{T}_{n, n'}$ is $n$.
\end{proof}
\else
\fi

We first argue that recoverable wait-free binary consensus can be solved among $n'$ processes using a single object $O$ of type $\mathcal{T}_{n, n'}$.
The object $O$ begins with value $s$.
A process with input $x \in \{0, 1\}$ first applies $op_R$.
If the operation returns a value $s_{v,i}$, then the process decides $v$.
If the operation returns $\bot$, then the process decides $0$ (we will argue that this never happens).
Otherwise, the operation returns the initial value $s$.
In this case, the process applies $op_x$ and then decides the value returned.

Notice that any process with input $x \in \{0, 1\}$ applies at most one $op_x$ operation, because $op_0$ and $op_1$ change the value of $O$ from $s$ to $s_{0, 1}$ and $s_{1, 1}$, respectively, and once $O$ has a value different from $s$ it can never have value $s$ again.
To change the value of $O$ from its initial value $s$ to $s_{v, i}$, at least $i$ $op_0$ and $op_1$ operations must be applied.
Since there are $n'$ processes, the object $O$ can never have value $s_{v, i}$ where $i > n'$ during any execution of our algorithm.
Therefore, none of the $op_R$ operations can return $\bot$, and the first $op_x$ operation applied to $O$ where $x \in \{0, 1\}$ determines the value returned by all subsequent operations.
Hence, the algorithm satisfies agreement.
The object $O$ can only have value $s_{v, i}$, for some $v \in \{0, 1\}$ and some $i \in \{1, \ldots, n'\}$, if the first operation applied to $O$ was $op_v$.
A process only applies $op_v$ if it has input $v$, so the algorithm satisfies validity.
Finally, starting from its initial state, each process applies at most $2$ operations to $O$ before deciding, so the algorithm satisfies recoverable wait-freedom.

By considering the set of executions $\mathcal{E}^\star_m(C)$, where $m = \lceil\frac{n}{2}\rceil$ and $C$ is an initial configuration that is bivalent with respect to $\mathcal{E}^\star_m(C)$, we can perform a valency argument to prove that the recoverable consensus number of $\mathcal{T}_{n, n'}$ is at most $n'$.
First, we use Lemma~\ref{lem:eventualcrit}~\ref{lem:eventualcrit:a} to obtain an execution $\alpha$ that is critical with respect to $\mathcal{E}^\star_m(C)$.
Lemma~\ref{lem:sameobject} can be used along with a standard argument to show that every process is poised to access the same object $O$ of type $\mathcal{T}_{n, n'}$ in $C\alpha$.
We then show that every process must be poised to apply $op_0$ or $op_1$ in $C\alpha$.
Consider two permutations $\sigma_0$ and $\sigma_1$ of $p_0, \ldots, p_{n'}$, where $\sigma_0$ begins with a process on team $0$ in $C\alpha$ and $\sigma_1$ begins with a process on team $1$ in $C\alpha$.
Since each process applies either $op_0$ or $op_1$ in $\sigma_0$ and $\sigma_1$, we argue that the value of $O$ in $C\alpha\sigma_0$ and $C\alpha\sigma_1$ must be either $s_\bot$ or $s_{v, i}$ for some $v \in \{0, 1\}$ and $i > n'$.
Hence, in any executions from $C\alpha\sigma_0$ or $C\alpha\sigma_1$, any process that applies $op_R$ to $O$ will ``break'' the object by changing its value to $s_\bot$.
Using this, we construct two executions $\rho$ and $\tau$ from $C\alpha\sigma_0$ and $C\alpha\sigma_1$, respectively, each containing only steps by $p_{n'}$ such that $p_{n'}$ decides the same value in $\rho$ and $\tau$.
By showing that $\alpha\sigma_0\rho$ and $\alpha\sigma_1\tau$ are both in $\mathcal{E}^\star_m(C)$, we obtain a contradiction.
\ifarxiv
We formalize this valency argument in the proof of the following lemma.
\else
We perform this valency argument in the full version of this paper.
\fi

\begin{lemma}\label{lem:rconsn'}
The recoverable consensus number of type $\mathcal{T}_{n, n'}$ $n'$.
\end{lemma}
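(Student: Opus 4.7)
My plan is to establish the two bounds separately. The lower bound---that the recoverable consensus number is at least $n'$---follows from the $n'$-process algorithm sketched above. For the upper bound, I plan to show that no recoverable wait-free consensus algorithm for $n' + 1$ processes can be implemented using objects of type $\mathcal{T}_{n, n'}$ together with registers. Suppose for contradiction that such an algorithm $A$ exists for processes $p_0, \ldots, p_{n'}$; I will adapt the valency machinery developed in Section~\ref{sec:nec} with the execution family $\mathcal{E}^\star_m(C)$, where $m = \lceil n/2 \rceil$. Using Observation~\ref{obs:bivinit}, identify a bivalent initial configuration $C$ with respect to $\mathcal{E}^\star_m(C)$, and invoke Lemma~\ref{lem:eventualcrit}~\ref{lem:eventualcrit:a} to obtain a critical execution $\alpha \in \mathcal{E}^\star_m(C)$. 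Lemma~\ref{lem:sameobject} ensures every process is poised to access the same object in $C\alpha$, and a standard commutativity argument comparing $C\alpha p_ip_j$ with $C\alpha p_jp_i$ (for opposite-team processes $p_i, p_j$, which exist by Lemma~\ref{lem:diffteams}) shows this object cannot be a register, so it is an object $O$ of type $\mathcal{T}_{n, n'}$. I would then rule out via a case analysis on $\textit{value}(O, C\alpha)$ that any process is poised to apply $op_R$: since $op_R$ either leaves $O$ unchanged or converts it to $s_\bot$, one can always produce an indistinguishability argument against a process on the opposite team, contradicting its univalency via Observation~\ref{obs:stilluniv}.

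The heart of the argument is as follows. Every process in $C\alpha$ is poised to apply $op_0$ or $op_1$ to $O$. Pick permutations $\sigma_0, \sigma_1$ of $p_0, \ldots, p_{n'}$ with $\sigma_x$ beginning with a team-$x$ process. Since $\sigma_x$ contributes $n' + 1$ additional $op_0$ or $op_1$ applications to $O$ beyond those in $\alpha$, the value of $O$ in $C\alpha\sigma_x$ is either $s_\bot$ or $s_{v, i}$ with $i > n'$; in either case, any subsequent $op_R$ returns $\bot$ and sets $O$ to $s_\bot$. I then plan to construct finite executions $\rho$ from $C\alpha\sigma_0$ and $\tau$ from $C\alpha\sigma_1$, containing only events by $p_{n'}$, such that $p_{n'}$ decides the same value in both. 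The strategy is to begin each with crash $c_{n'}$, returning $p_{n'}$ to its initial state; since $\sigma_0$ and $\sigma_1$ touch only $O$, every other shared object has identical values in the two configurations, so $p_{n'}$'s trajectory coincides until its first access to $O$. If that access is $op_R$, the two runs synchronize immediately as $O$ becomes $s_\bot$ and $p_{n'}$ sees $\bot$. If it is $op_0$ or $op_1$, I would interleave further crashes of $p_{n'}$ so that $p_{n'}$ repeatedly applies $op_0$ or $op_1$ to drive $O$ along its state chain toward $s_\bot$; once $O$ equals $s_\bot$ in both runs, $p_{n'}$'s remaining execution coincides. The budget of up to $mnn'$ crashes of $p_{n'}$ permitted in $\mathcal{E}^\star_m$ (given the $n'$ steps by lower-identifier processes in $\sigma_x$) easily covers the at most $n - 1 - n'$ crashes needed to exhaust $O$.

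Finally, Observation~\ref{obs:closedconcat} together with the crash-count verification above yields $\alpha\sigma_0\rho, \alpha\sigma_1\tau \in \mathcal{E}^\star_m(C)$. Since $\alpha p_{j_0}$ is $0$-univalent for the first process $p_{j_0}$ of $\sigma_0$, Observation~\ref{obs:stilluniv} implies that $\alpha\sigma_0\rho$ is $0$-univalent, and symmetrically $\alpha\sigma_1\tau$ is $1$-univalent; but $p_{n'}$ decides the same value in both by construction, a contradiction. I expect the main obstacle to be the construction of $\rho$ and $\tau$ in the case where $p_{n'}$'s first $O$-access is $op_0$ or $op_1$: the team value $v$ and index $i$ encoded in $O$ may differ across the two configurations, so a naive run of $p_{n'}$ yields divergent responses. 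Carefully orchestrating crashes so that $p_{n'}$ never decides until $O$ has been exhausted to $s_\bot$ in both runs is the crux of the argument.
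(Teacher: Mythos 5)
Your proposal is correct and follows essentially the same route as the paper's proof: the same execution family $\mathcal{E}^\star_m(C)$ with $m = \lceil n/2\rceil$, the same critical-execution machinery, the same elimination of $op_R$ at the critical configuration, and the same crash-and-retry construction in which $p_{n'}$ repeatedly crashes immediately after each $op_0$/$op_1$ application so that the divergent responses are forgotten while $O$ is driven to $s_\bot$ in both runs. The paper formalizes this via an explicit induction maintaining $C\alpha\sigma_0\rho_i \widesim{p_{n'}} C\alpha\sigma_1\tau_i$, which is exactly the invariant your sketch relies on.
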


\ifarxiv
\begin{proof}
We already showed that the recoverable consensus number of $\mathcal{T}_{n, n'}$ is at least $n'$.
We now show that the recoverable consensus number of $\mathcal{T}_{n, n'}$ is at most $n'$.
To obtain a contradiction, suppose that there is a recoverable wait-free consensus algorithm for $n'+1$ processes $p_0, \ldots, p_{n'}$ using objects of $\mathcal{T}_{n, n'}$ and registers.
Let $C$ be an initial configuration of this algorithm in which $p_0$ has input $0$ and $p_1$ has input $1$.
Let $m = \lceil\frac{n}{2}\rceil$.
By Observation~\ref{obs:bivinit}, $C$ is bivalent with respect to $\mathcal{E}^\star_m(C)$.
By Lemma~\ref{lem:eventualcrit}~\ref{lem:eventualcrit:a}, there exists a finite execution $\alpha \in \mathcal{E}^\star_m(C)$ such that $\alpha$ is critical with respect to $\mathcal{E}^\star_m(C)$.
Lemma~\ref{lem:diffteams} implies that there is a process on team $0$ in $C\alpha$ and there is a process on team $1$ in $C\alpha$.

Lemma~\ref{lem:sameobject} implies that all of the processes are poised to access the same object $O$ in $C\alpha$.
Since there are $n'+1$ processes and $n'+1 > 1$, a standard argument can be used to show that $O$ is not a register, that is, $O$ is an object of $\mathcal{T}_{n, n'}$.
For all $i \in \{0, \ldots, n'\}$, define $o_i$ as the operation that $p_i$ is poised to apply to $O$ in $C\alpha$.
First suppose that $o_i = op_R$ for some $i$, where $p_i$ is on team $v$ in $C\alpha$.
Then either $\textit{value}(O, C\alpha p_i) = \textit{value}(O, C\alpha)$ or $\textit{value}(O, C\alpha p_i) = s_\bot$.
Let $p_j$ be a process on team $\bar{v}$ in $C\alpha$.
In case $\textit{value}(O, C\alpha p_i) = \textit{value}(O, C\alpha)$, notice that $C\alpha p_i$ and $C\alpha$ are indistinguishable to $p_j$ and all of the objects have the same values in these two configurations, so $p_j$ decides the same value in its solo-terminating executions from $C\alpha p_ip_j$ and $C\alpha p_j$.
However, $\alpha p_ip_j$ is $v$-univalent with respect to $\mathcal{E}^\star_m(C)$ and $\alpha p_j$ is $\bar{v}$-univalent with respect to $\mathcal{E}^\star_m(C)$, which is a contradiction.
Otherwise, $\textit{value}(O, C\alpha p_i) = s_\bot$.
Notice that $\textit{value}(O, C\alpha p_j p_i) = s_\bot$ in this case, no matter what $o_j$ is.
Suppose that $i < j$ (the case in which $j < i$ is symmetric), so that $\textit{exec}(C\alpha, p_ic_j)$ and $\textit{exec}(C\alpha, p_j p_ic_j)$ are in $\mathcal{E}^\star_m(C\alpha)$.
By Observation~\ref{obs:closedconcat}, $\alpha p_ic_j$ and $\alpha p_jp_ic_j$ are in $\mathcal{E}^\star_m(C)$.
Hence, by Observation~\ref{obs:stilluniv}, $\alpha p_ic_j$ is $v$-univalent with respect to $\mathcal{E}^\star_m(C)$ and $\alpha p_jp_ic_j$ is $\bar{v}$-univalent with respect to $\mathcal{E}^\star_m(C)$.
But $C\alpha p_ic_j$ and $C\alpha p_jp_ic_j$ are indistinguishable to $p_j$ and all of the objects have the same values in these two configurations.
Therefore, $p_j$ decides the same values in its solo-terminating executions from $C\alpha p_ic_j$ and $C\alpha p_jp_ic_j$, which is a contradiction.
Hence, $o_0, \ldots, o_{n'}$ are all either $op_0$ or $op_1$.

Let $\sigma_0$ be some permutation of $p_0, \ldots, p_{n'}$ beginning with a process on team $0$ in $C\alpha$, and let $\sigma_1$ be a permutation beginning with a process on team $1$ in $C\alpha$.
We proceed by constructing an execution $\rho$ from $C\alpha\sigma_0$ and an execution $\tau$ from $C\alpha\sigma_1$ such that $\rho$ and $\tau$ contain only steps by $p_{n'}$, the executions $\textit{exec}(C\alpha, \sigma_0)\rho$ and $\textit{exec}(C\alpha, \sigma_1)\tau$ are both in $\mathcal{E}^\star_m(C\alpha)$, $C\alpha\sigma_0\rho \widesim{p_{n'}} C\alpha\sigma_1\tau$, and $p_{n'}$ has decided the same value in $C\alpha\sigma_0\rho$ and $C\alpha\sigma_1\tau$.
Observation~\ref{obs:closedconcat} implies that $\alpha\sigma_0\rho \in \mathcal{E}^\star_m(C)$ and $\alpha\sigma_1\tau \in \mathcal{E}^\star_m(C)$.
Hence, by Observation~\ref{obs:stilluniv}, $\alpha\sigma_0\rho$ is $0$-univalent with respect to $\mathcal{E}^\star_m(C)$ and $\alpha\sigma_1\tau$ is $1$-univalent with respect to $\mathcal{E}^\star_m(C)$.
This is a contradiction.

We will construct $\rho$ and $\tau$ inductively.
We define $\rho_0, \rho_1, \ldots, \rho_{\ell}$ and $\tau_0, \tau_1, \ldots, \tau_{\ell}$ such that $p_{n'}$ has decided in $C\alpha\sigma_0\rho_\ell$ and $C\alpha\sigma_1\tau_\ell$, $\ell < n - n'-1$, and, for all $i \in \{0, \ldots, \ell\}$,
\begin{itemize}
\item $C\alpha\sigma_0\rho_i \widesim{p_{n'}} C\alpha\sigma_0\tau_i$,
\item if $i < \ell$, then $p_{n'}$ crashes exactly $i+1$ times in $\rho_i$ and $\tau_i$, and $p_{n'}$ crashes either $\ell$ or $\ell+1$ times in $\rho_\ell$ and $\tau_\ell$,
\item if $i < \ell$, then $\textit{value}(O, C\alpha\sigma_0\rho_i) = s_{v, n'+1+i+j}$ for some $v \in \{0, 1\}$ and some $j \geq 0$,
\item if $i < \ell$, then $\textit{value}(O, C\alpha\sigma_1\tau_i) = s_{v', n'+1+i+j}$ for some $v' \in \{0, 1\}$, and
\end{itemize}
Define $\rho_0 = \tau_0 = c_{n'}$.
Then $C\alpha\sigma_0\rho_0 \widesim{p_{n'}} C\alpha\sigma_1\tau_0$.
Recall that $\sigma_0$ and $\sigma_1$ are both permutations of $p_0, \ldots, p_{n'}$.
Therefore, if $\textit{value}(O, C\alpha) = s$, then either $\textit{value}(O, C\alpha\sigma_0) = s_{v, n'+1}$ and $\textit{value}(O, C\alpha\sigma_1) = s_{v', n'+1}$ (if $n'+1 < n$), or $\textit{value}(O, C\alpha\sigma_0) = \textit{value}(O, C\alpha\sigma_1) = s_\bot$ (if $n'+1 = n$).
Similarly, if $\textit{value}(O, C\alpha) = s_{v, j}$, then either $\textit{value}(O, C\alpha\sigma_0) = s_{v, j+n'+1}$ and $\textit{value}(O, C\alpha\sigma_1) = s_{v, j+n'+1}$ (if $j+n'+1 < n$), or $\textit{value}(O, C\alpha\sigma_0) = \textit{value}(O, C\alpha\sigma_1) = s_\bot$ (if $j+n'+1 \geq n$).
Finally, if $\textit{value}(O, C\alpha) = s_\bot$, then $\textit{value}(O, C\alpha\sigma_0) = \textit{value}(O, C\alpha\sigma_1) = s_\bot$.
In any of these cases, if $\textit{value}(O, C\alpha\sigma_0) = \textit{value}(O, C\alpha\sigma_1) = s_\bot$, then define $\ell = 0$.

Now suppose that we have constructed $\rho_i$ and $\tau_i$ where $0 \leq i < \ell$.
Then $\textit{value}(O, C\alpha\sigma_0\rho_i) = s_{v, n'+1+i+j}$ and $\textit{value}(O, C\alpha\sigma_1\tau_i) = s_{v', n'+1+i+j}$ for some $j \geq 0$.
Let $\delta$ be the $p_{n'}$-only execution from $C\alpha\sigma_0\rho_i$ in which $p_{n'}$ takes steps until either it decides a value or it is poised to apply an operation to $O$.
Since all of the objects except possibly $O$ have the same values in $C\alpha\sigma_0\rho_i$ and $C\alpha\sigma_1\tau_i$ and $C\alpha\sigma_0\rho_i \widesim{p_{n'}} C\alpha\sigma_1\tau_i$, there is a $p_{n'}$-only execution $\delta'$ from $C\alpha\sigma_1\tau_i$ such that $\delta \widesim{p_{n'}} \delta'$.
If $p_{n'}$ has decided in $C\alpha\sigma_0\rho_i\delta$ and $C\alpha\sigma_1\tau_i\delta'$, then define $\rho_{i+1} = \rho_i\delta$, $\tau_{i+1} = \tau_i\delta'$, and $\ell = i+1$.
Notice that $p_{n'}$ has crashed exactly $i+1 = \ell$ times in $\rho_{\ell}$ and $\tau_\ell$ in this case.
Otherwise, define $\rho_{i+1} = \rho_i\delta p_{n'}c_{n'}$ and $\tau_{i+1} = \rho_i\delta' p_{n'}c_{n'}$.
Notice that $C\alpha\sigma_0\rho_{i+1} \widesim{p_{n'}} C\alpha\sigma_1\tau_{i+1}$.
If $p_{n'}$ was poised to apply $op_R$ in $C\alpha\sigma_0\rho_i\delta$ and $C\alpha\sigma_1\tau_i\delta'$ or $n'+1+i+j = n-1$, then $\textit{value}(O, C\alpha\sigma_0\rho_i\delta p_{n'}c_{n'}) = \textit{value}(O, C\alpha\sigma_1\tau_i\delta'p_{n'}c_{n'}) = s_{\bot}$.
In this case, define $\ell = i+1$.
Otherwise, $p_{n'}$ is poised to apply $op_0$ or $op_1$ in these two configurations, $n'+1+i+j < n-1$, $\textit{value}(O, C\alpha\sigma_0\rho_i\delta) = s_{v, n'+1+i+j}$, and $\textit{value}(O, C\alpha\sigma_1\tau_i\delta') = s_{v', n'+1+i+j}$.
In this case, when $p_{n'}$ applies $op_0$ or $op_1$ in $C\alpha\sigma_0\rho_i\delta$, it changes the value of $O$ to $s_{v, n'+2+i+j}$.
Similarly, when $p_{n'}$ applies $op_0$ or $op_1$ in $C\alpha\sigma_1\tau_i\delta$, it changes the value of $O$ to $s_{v', n'+2+i+j}$.
Finally, $p_{n'}$ does not crash during $\delta$ or $\delta'$, so $p_{n'}$ crashes exactly $i+2$ times in $\rho_{i+1} = \rho_{i}\delta p_{n'}c_{n'}$ and $\tau_{i+1} = \tau_i\delta' p_{n'}c_{n'}$.
This completes the construction.

Define $\rho = \sigma_0\rho_\ell$ and $\tau = \sigma_1\tau_\ell$.
Since $p_{n'}$ crashes at most $\ell+1 < n-n'$ times in $\rho$ and $\tau$ and $p_0$ takes a step in $\sigma_0$ and $\sigma_1$, we know that the number of crash steps by $p_{n'}$ in any prefix of $\rho$ or $\tau$ is at most $n-n' \leq n-1$ times the number of steps collectively taken by $p_0, \ldots, p_{n'-1}$ in that prefix.
Since $n-1 \leq \lceil\frac{n}{2}\rceil\cdot 2 \leq \lceil\frac{n}{2}\rceil\cdot (n'+1) = m\cdot (n'+1)$, we know that $\rho, \tau \in \mathcal{E}^\star_m(C\alpha)$.
This completes the proof.
\end{proof}
\else
\fi

Lemma~\ref{lem:consn} and Lemma~\ref{lem:rconsn'} together say that the consensus number of $\mathcal{T}_{n, n'}$ is $n$ and the recoverable consensus number of $\mathcal{T}_{n, n'}$ is $n'$.
Hence, the ability of a non-readable type to solve recoverable wait-free consensus can be much lower than its ability to solve wait-free consensus.

\section{Conclusion}\label{sec:conclusion}

In this paper, we proved that the recoverable consensus hierarchy is robust for deterministic, readable types.
To accomplish this, we proved that Delporte-Gallet, Fatourou, Fauconnier, and Ruppert's \cite{dffr-22} $n$-recording condition is necessary for solving recoverable wait-free consensus among $n$ processes using deterministic types.
These authors originally proved that any deterministic, readable type with consensus number $n$ has recoverable consensus number at least $n-2$, though it remained an open question to determine whether there exists a type with consensus number $n$ and recoverable consensus number $n-2$.
However, for all $n \geq 4$, they defined a deterministic, readable type $\mathcal{X}_n$ that is $n$-discerning, $(n-2)$-recording, but not $(n-1)$-recording.
Our results show that $\mathcal{X}_n$ has recoverable consensus number $n-2$, resolving the open question above.

We also showed that the gap between the consensus number and recoverable consensus number of a deterministic, non-readable type can be arbitrarily large.
It is still unknown whether the recoverable consensus hierarchy is robust for all deterministic types.

\ifarxiv
\section*{Acknowledgements}
\else
\begin{acks}
\fi
I gratefully acknowledge the support of the Natural Sciences and Engineering Research Council of Canada (NSERC), PDF-578266-2023.
I also thank Trevor Brown for providing feedback on an earlier draft of this paper.
Finally, I thank the anonymous reviewers for their feedback.
\ifarxiv
\else
\end{acks}
\fi

\bibliographystyle{ACM-Reference-Format}
\bibliography{refs}

\end{document}